\newtheorem{defi}{Definition}
\newtheorem{lemma}[defi]{Lemma}
\newtheorem{satz}[defi]{Theorem}
\newtheorem{bem}[defi]{Remark}
\newtheorem{exempel}[defi]{Example}
\newcommand{\tr}{{\operatorname{Tr}\,}}
\newcommand{\id}{{\operatorname{id}}}
\newcommand{\supp}{{\operatorname{supp}\,}}
\newcommand{\bra}[1]{{\langle{#1}|}}
\newcommand{\ket}[1]{{|{#1}\rangle}}
\newcommand{\C}{{\mathbb{C}}}
\newcommand{\alg}[1]{{\mathfrak{#1}}}
\newcommand{\fset}[1]{{\mathcal{#1}}}
\newcommand{\1}{{\mathbbm{1}}}
\newcommand{\im}{{\operatorname{im}}}
\begin{document}

\title{Coding Theorem and Strong Converse\\ for Quantum Channels}
\author{Andreas Winter\thanks{Manuscript received September 8, 1998; revised May 4, 1999.}\thanks{The author was with SFB 343, Fakult\"at f\"ur Mathematik,
Universit\"at Bielefeld, Postfach 100131, 33501 Bielefeld, Germany.}\thanks{Current electronic address: andreas.winter@uab.cat}
}

\markboth{IEEE Transactions on Information Theory, vol. 45, no. 7, November 1999}{IEEE Transactions on Information Theory, vol. 45, no. 7, November 1999}

\maketitle


\begin{abstract}
  In this correspondence we present a new proof of Holevo's
  coding theorem for transmitting classical information through quantum
  channels, and its strong converse. The technique is largely inspired by
  Wolfowitz's combinatorial approach using types of sequences. As a by--product
  of our approach which is independent of previous ones, both in the coding
  theorem and the converse, we can give a new proof of Holevo's information
  bound.
\end{abstract}

\begin{keywords}
  Classical capacity, coding, Holevo bound, quantum channel, strong converse.
\end{keywords}

\setcounter{page}{2481}

\section{Introduction}
\label{sec:intro}
After the recent achievements in quantum information theory,
most notably Schumacher's quantum data compression~\cite{schumacher:qucoding},
and the determination of the quantum channel capacity by
Holevo~\cite{holevo:qucapacity} (and independently by Schumacher
and Westmoreland~\cite{schumacher:capacity}), building
on ideas of Hausladen et.~al.~\cite{hausladen:qucap:purecase},
we feel that one should try to convert other and stronger techniques
of classical information theory than those used in the cited works 
to the quantum case. The present work will do this for the method
of types, as it is called by Csisz\'{a}r and
K\"orner~\cite{csiszar:koerner}, and which constitutes
a manifestly combinatorial approach to information theory,
by rephrasing it in the operator language of quantum theory
(section~\ref{sec:types:etc}). In section~\ref{sec:measurement}
we give a quantitative formulation of the intuition
that measurements with high probability
of success disturb the measured state only little.
These technical results we apply to the
coding problem for discrete memoryless quantum
channels: we give a new proof of the quantum channel coding theorem
(by a \emph{maximal code} argument, whereas previous proofs adapted
the \emph{random coding method} to quantum states),
and prove the strong converse, both in section~\ref{sec:code:bounds}.
In section~\ref{sec:holevo:bound} we demonstrate how to obtain
from these an independent, and completely
elementary proof of the Holevo bound~\cite{holevo:bound}.
We point out that our technique is also suited to the situations
of encoding under linear constraints, and with infinite
input alphabet.
\par
It should be mentioned that the strong converse results
also in recent independent work of Ogawa and
Nagaoka~\cite{ogawa:nagaoka}, by a different method.

\section{Prerequisites and notational conventions}
\label{sec:conventions}
We will use the definitions and notation of~\cite{winter:quil},
in particular  finite sets will be $\fset{A}$, $\fset{B}$,
$\fset{M}$, $\fset{X}$, ...,
quantum states (density operators) $\rho$, $\sigma$, ...,
probability distributions $P$, $Q$, ...,
and classical--quantum operations $V$, $W$, ...
(also stochastic matrices), whereas general quantum operations
(trace preserving, and completely positive maps) of C${}^*$--algebras
$\alg{X}$, $\alg{Y}$, ... are denoted as (pre-)adjoint
maps $\varphi_*$, $\psi_*$, ... .
Our algebras will be of finite dimension, and apart
from commutative ones we will confine ourselves to the C${}^*$--algebra
$\alg{L}({\cal H})$, the algebra of (bounded) linear operators of the complex
Hilbert space ${\cal H}$, even though everything works equally well for
the general case.
\par
The exponential function $\exp$ is always understood to
basis $2$, as well as the logarithm $\log$. 
The same symbol $H$ denotes the Shannon and the von Neumann entropy (as Shannon's
is the commutative case of von Neumann's).
\par
We shall need a basic fact about the trace norm $\|\cdot\|_1$
of an operator, which is the sum of the absolute values of the eigenvalues.
\par
It is known (and not difficult to prove, using the polar decomposition of
$\alpha$, see~\cite{arveson:invitation}) that
$$\|\alpha\|_1=\max_{\|B\|_\infty\leq 1} |\tr(\alpha B)|.$$ 
If $\alpha$ is selfadjoint we may write it as the difference
$\alpha_+-\alpha_-$ of its positive and its negative part. Observe that then
$$\|\alpha\|_1=\tr(\alpha_+)+\tr(\alpha_-)=\max_{-\1\leq B\leq\1} \tr(\alpha B).$$

\section{Quantum channels and codes}
\label{sec:quchannel}
The following definition is from~\cite{holevo:channels}:
a \emph{(discrete memoryless) classical--quantum channel} (cq--DMC)
is a mapping $W$ from a finite set $\fset{X}$ into
the set $\alg{S}(\alg{Y})$ of states on
the system $\alg{Y}=\alg{L}({\cal H})$, taking $x\in\fset{X}$ to $W_x$
(by linear extension we may view this as a quantum operation from
$(\C\fset{X})_*$ to $\alg{Y}_*$). For the rest of the correspondence
fix ${\cal H}$ and $\fset{X}$, $d=\dim{\cal H}$ and $a=|\fset{X}|$.
\par
From~\cite{winter:quil} we recall: for a probability distribution $P$ on $\fset{X}$ let
$PW=\sum_{x\in\fset{X}} P(x)W_x$ the average state of the channel $W$,
$H(W|P)=\sum_{x\in\fset{X}} P(x)H(W_x)$ is the conditional von Neumann entropy, 
the mutual information between a distribution and the channel is
$I(P;W)=H(PW)-H(W|P)$. Finally let $C(W)=\max_{P} I(P;W)$.
Note that these notions still make sense for infinite
$\fset{X}$ if only $W$ is required to be measurable (so $\fset{X}$ has to
carry some measurable structure): then also $H(W)$ is measurable,
and $PW$, $H(W|P)$ are expectations over the probability measure $P$.
\par
An $n$--\emph{block code} for a quantum channel $W$ is a pair $(f,{D})$,
where $f$ is a mapping from a finite
set $\fset{M}$ into $\fset{X}^n$, and ${D}$ is an observable on
$\alg{Y}^{\otimes n}$ indexed by $\fset{M}'\supset\fset{M}$,
i.e. a partition of $\1$ into positive
operators ${D}_m$, $m\in\fset{M}'$.
With the convention $W_{x^n}=W_{x_1}\otimes\cdots\otimes W_{x_n}$ for
a sequence $x^n=x_1\ldots x_n\in\fset{X}^n$
the \emph{(maximum) error probability} of the code is defined as
$$e(f,{D})=\max\{1-\tr(W_{f(m)}{D}_m): m\in\fset{M}\}.$$
We call $(f,{D})$ an $(n,\lambda)$--code, if $e(f,{D})\leq\lambda$.
Define $N(n,\lambda)$ as the maximum size $|\fset{M}|$ of an $(n,\lambda)$--code.
The \emph{rate} of an $n$--block code is defined as $\frac{1}{n}\log|\fset{M}|$.
Our main results are summarized in

\medskip
\begin{satz}
  \label{satz:capacity}
  For every $\lambda\in(0,1)$ there exists a constant $K(\lambda,a,d)$ such that
  for all cq--DMCs $W$
  $$\left|\log N(n,\lambda)-nC(W)\right|\leq K(\lambda,a,d)\sqrt{n}.$$
\end{satz}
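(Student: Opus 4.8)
The plan is to prove the two halves of Theorem~\ref{satz:capacity} separately in Section~\ref{sec:code:bounds} --- the \emph{achievability} bound $\log N(n,\lambda)\ge nC(W)-K\sqrt n$ and the \emph{strong converse} $\log N(n,\lambda)\le nC(W)+K\sqrt n$ --- both resting on the operator method of types of Section~\ref{sec:types:etc} and on the gentle-measurement estimate of Section~\ref{sec:measurement} (a measurement succeeding on a state with probability $\ge 1-\varepsilon$ disturbs that state by $O(\sqrt\varepsilon)$ in trace norm). One fixes a type $P$ of denominator $n$ (there are only $(n+1)^a$ of them); the basic operators are, for $x^n$ of type $P$, the conditionally typical projector $\Pi_{x^n}$ of $W_{x^n}$ (with $\tr(W_{x^n}\Pi_{x^n})\ge 1-\eta$, rank $\le 2^{n(H(W|P)+\delta)}$, and spectrum of $\Pi_{x^n}W_{x^n}\Pi_{x^n}$ inside $[2^{-n(H(W|P)+\delta)},2^{-n(H(W|P)-\delta)}]$), together with the typical projector $\Pi$ of the average output state $(PW)^{\otimes n}$ (analogous bounds with $H(PW)$, and moreover $\tr(W_{x^n}\Pi)\ge 1-\eta$ for \emph{every} $x^n$ of type $P$, which follows from the exact identity $\tr(W_{x^n}(-\log(PW)^{\otimes n}))=nH(PW)$ plus a concentration bound on a sum of $n$ bounded terms). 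The decisive point is to take the typicality slack $\delta$ of order $1/\sqrt n$ rather than $o(1)$, so that $n\delta=O(\sqrt n)$; the defects $\eta$ are then controlled by Chebyshev's (or Bernstein's) inequality with variances bounded through $a$ and $d$, and this is exactly what produces the $K(\lambda,a,d)\sqrt n$ in the statement.

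\emph{Achievability.} Choose $P$ maximizing $I(\cdot;W)$ over the type grid; by Fannes-type continuity of $P\mapsto I(P;W)$ and the $O(1/n)$ fineness of the grid, $I(P;W)\ge C(W)-o(1/\sqrt n)$. Then build a code greedily: starting from $\Pi$, adjoin a codeword $x^n(i)$ of type $P$ with decoding operator $D_i$ built from $\Pi_{x^n(i)}$ and the current remainder $\Pi-\sum_{j<i}D_j$ (the standard compression, normalized to keep $\sum_i D_i\le\1$) whenever some $x^n$ of type $P$ still succeeds with probability $\ge 1-\lambda$ against that remainder; stop otherwise. At the stopping point every $x^n$ of type $P$ fails; averaging the failure inequality over $x^n$ uniform on the type class, using gentle measurement to strip the sandwiching $\Pi_{x^n}$ (cost $O(\sqrt\eta)$), and using the elementary comparison $\bar W_P\le(n+1)^a(PW)^{\otimes n}$ for the average output state $\bar W_P$ of the type class (immediate from $(PW)^{\otimes n}=\sum_Q P^n(T_Q)\bar W_Q$ and $P^n(T_P)\ge(n+1)^{-a}$), one finds $\tr(\Pi-\sum_j D_j)$ to be a small fraction of $\tr\Pi$. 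Hence $\sum_i\tr D_i$ is comparable to $\tr\Pi\ge 2^{n(H(PW)-\delta)}(1-\eta)$, while $\tr D_i\le\tr\Pi_{x^n(i)}\le 2^{n(H(W|P)+\delta)}$, so the number of codewords is, up to polynomial factors, at least $2^{n(H(PW)-H(W|P)-2\delta)}=2^{nI(P;W)-O(\sqrt n)}\ge 2^{nC(W)-K\sqrt n}$.

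\emph{Strong converse.} Let $(f,D)$ be any $(n,\lambda)$-code of size $N$. Grouping codewords by the type of $f(m)$, some type $P$ keeps $N_P\ge N/(n+1)^a$ of them, and it suffices to get $\log N_P\le nI(P;W)+O(\sqrt n)\le nC(W)+O(\sqrt n)$, the discarded polynomial factor costing only $O(\log n)$. For each retained $m$, from $\tr(W_{f(m)}D_m)\ge 1-\lambda$, the gentle-measurement lemma applied with $\Pi_{f(m)}$, and the spectral bound $\Pi_{f(m)}W_{f(m)}\Pi_{f(m)}\le 2^{-n(H(W|P)-\delta)}\Pi_{f(m)}$, one extracts a conditionally typical ``piece'' $\Pi_{f(m)}D_m\Pi_{f(m)}$ of trace at least $2^{n(H(W|P)-\delta)}(1-\lambda-O(\sqrt\eta))$. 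Using the type-theoretic fact (the operator analog of ``conditionally typical implies unconditionally typical'') that these pieces are, up to a controllable error, supported in the range of the output typical projector $\Pi$, together with $\sum_m D_m=\1$, one sees that the pieces add up within a space of dimension $\tr\Pi\le 2^{n(H(PW)+\delta)}$; therefore $N_P\cdot 2^{n(H(W|P)-\delta)}(1-\lambda-o(1))\le 2^{n(H(PW)+\delta)}$, i.e.\ $\log N_P\le n(I(P;W)+2\delta)+O(1)$.

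I expect the main obstacle to be the last step of the converse: finding the correct non-commutative substitute for the classical fact that the decoding sets, once intersected with the conditionally typical sets, are disjoint and lie inside the unconditionally typical output set. One must control $\sum_m\Pi_{f(m)}D_m\Pi_{f(m)}$ against the single fixed projector $\Pi$ even though the various $\Pi_{f(m)}$ are not literally dominated by $\Pi$ (a naive $\|\Pi W_{f(m)}\Pi\|_\infty$-bound already fails, since a weight-typical vector need not be conditionally typical); this is precisely what the lemmas of Section~\ref{sec:types:etc} must be designed for, with the gentle-measurement lemma turning the approximate containments into quantitative ones. A secondary, pervasive difficulty is the uniform bookkeeping: $\delta$ must be tuned to $\sim c(\lambda,a,d)/\sqrt n$ so that the defects $\eta$, the gentle-measurement errors $O(\sqrt\eta)$, and the type-counting factors all stay below thresholds depending only on $\lambda$, while the exponent $n(H(PW)-H(W|P)-2\delta)$ stays within $K\sqrt n$ of $nC(W)$, uniformly over all cq--DMCs with the given $a$ and $d$; verifying that the concentration estimates of Section~\ref{sec:types:etc} have exactly this strength is the bulk of the remaining routine work.
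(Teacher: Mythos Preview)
Your achievability sketch is essentially the paper's argument: a Feinstein-type maximal code, the observation that at maximality $B=\sum_m D_m$ is an $\eta$-shadow of every candidate $W_{x^n}$ (else gentle measurement lets you extend the code), and a shadow bound on the averaged output to lower-bound $\tr B$ against the per-codeword upper bound $\tr D_m\le\tr\Pi^n_{W,\delta}(f(m))$. Your variant of the averaging step (uniform on the type class plus $\bar W_P\le(n+1)^a(PW)^{\otimes n}$) works; the paper instead averages over $P^{\otimes n}$ restricted to variance-typical sequences, a cosmetic difference.

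The converse has a genuine gap exactly where you flag it, and your proposed route will not close it: sandwiching by the \emph{conditional} projectors $\Pi_{f(m)}$ produces pieces $\Pi_{f(m)}D_m\Pi_{f(m)}$ whose sum is not $\le\1$ (the $\Pi_{f(m)}$ do not commute), and there is no operator inequality of the form ``$\Pi_{f(m)}\le\Pi$'' that would push them into the output-typical subspace. The fix is to reverse the roles: set $D_m'=\Pi D_m\Pi$ with the single \emph{output} typical projector $\Pi=\Pi^n_{PW,\delta\sqrt a}$. Then $\sum_m D_m'\le\Pi$ is automatic, giving $\sum_m\tr D_m'\le\tr\Pi\le\exp(nH(PW)+O(\sqrt n))$ with no non-commutativity issue at all. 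The two remaining ingredients are (i) that $(f,D')$ is still an $(n,\tfrac{1+\lambda}{2})$-code, by gentle measurement together with the weak-law estimate $\tr(W_{f(m)}\Pi)\ge 1-\eta$ you already list; and (ii) a lower bound $\tr D_m'\ge c(\lambda)\exp(nH(W|P)-O(\sqrt n))$, which needs no spectral control of $\Pi W_{f(m)}\Pi$ whatsoever---it is simply the shadow-bound clause of the conditional-typical-projector lemma, applied to $D_m'$ as a $\tfrac{1-\lambda}{2}$-shadow of $W_{f(m)}$. A secondary caution: your proposed proof of the weak law via ``concentration of a sum of $n$ bounded terms'' is not uniform in $W$, since the summands $-\log q_j$ blow up when $PW$ has small eigenvalues; the paper instead shows the projector inequality $\Pi^n_{PW,\delta\sqrt a}\ge\Pi^n_{\kappa_*W,\delta}(x^n)$ for the dephasing $\kappa_*$ in the eigenbasis of $PW$, which keeps the defect $ad/\delta^2$ independent of $W$.
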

\begin{proof}
  Combine the code construction Theorem~\ref{satz:code:construction}
  (with a probability distribution $P$ maximizing $I(P;W)$
  and $\fset{A}=\fset{X}^n$)
  and the strong converse Theorem~\ref{satz:strong:converse}.
\end{proof}

\medskip
This theorem justifies the name \emph{capacity} for the quantity
$C(W)$, even in the strong sense of Wolfowitz~\cite{wolfowitz:coding}.

\section{Typical projectors and shadows}
\label{sec:types:etc}
Let $n$ a positve integer, and consider sequences
$x^n=x_1\ldots x_n\in\fset{X}^n$. For $x\in\fset{X}$
define the counting function $N(x|x^n)=|\{i\in[n]:x_i=x\}|$.
The \emph{type} of $x^n$ is the empirical distribution
$P_{x^n}$ on $\fset{X}$ of letters $x\in\fset{X}$ in $x^n$:
$P_{x^n}(x)=\frac{1}{n}N(x|x^n)$. Obviously the number of types is
upper bounded by $(n+1)^a$; we will refer to this fact as
\emph{type counting}.

Following Wolfowitz~\cite{wolfowitz:coding} we define
\begin{equation*}\begin{split}
  \fset{T}^n_{P,\delta}=\{x^n\in\fset{X}^n:\forall x\in\fset{X}\ 
                      &|N(x|x^n)-nP(x)|\leq        \\
                      &\ \leq\delta\sqrt{n}\sqrt{P(x)(1-P(x))}\},
\end{split}\end{equation*}
the set of \emph{variance--typical sequences of approximate type $P$ with
constant $\delta\geq 0$}.
Note that $\fset{T}^n_{P,0}$ is the set of sequences of type $P$.
Defining $K=2\frac{\log e}{e}$ we have

\medskip
\begin{lemma}[Typical sequences]
  \label{lemma:wolfowitz}
  For every probability distribution $P$ on $\fset{X}$
  $$P^{\otimes n}(\fset{T}^n_{P,\delta})\geq 1-\frac{a}{\delta^2}.$$
  For $x^n\in\fset{T}^n_{P,\delta}$,
  $$|-\log P^{\otimes n}(x^n)-nH(P)|\leq Ka\delta\sqrt{n},$$
  \begin{align*}
    |\fset{T}^n_{P,\delta}| &\leq\exp\left(nH(P)+Ka\delta\sqrt{n}\right), \\
    |\fset{T}^n_{P,\delta}| &\geq\left(1-\frac{a}{\delta^2}\right)
                                    \exp\left(nH(P)-Ka\delta\sqrt{n}\right).
  \end{align*}
\end{lemma}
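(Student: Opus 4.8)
The plan is to prove the four assertions in sequence, since each feeds into the next. First I would establish the probability bound $P^{\otimes n}(\fset{T}^n_{P,\delta})\geq 1-a/\delta^2$ by a union bound over the $a$ letters $x\in\fset{X}$ combined with Chebyshev's inequality: for a fixed $x$, the count $N(x|x^n)$ is a sum of $n$ i.i.d.\ Bernoulli$(P(x))$ random variables under $P^{\otimes n}$, so it has mean $nP(x)$ and variance $nP(x)(1-P(x))$, and Chebyshev gives
$$P^{\otimes n}\bigl(|N(x|x^n)-nP(x)|>\delta\sqrt{n}\sqrt{P(x)(1-P(x))}\bigr)<\frac{1}{\delta^2}.$$
Summing the complementary events over the $a$ letters yields the claim (with the harmless convention that letters with $P(x)\in\{0,1\}$ contribute a deterministic, trivially satisfied constraint).

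Next I would prove the pointwise entropy estimate. For $x^n\in\fset{T}^n_{P,\delta}$ one has $-\log P^{\otimes n}(x^n)=-\sum_{x}N(x|x^n)\log P(x)$, and $nH(P)=-\sum_x nP(x)\log P(x)$, so the difference is $-\sum_x\bigl(N(x|x^n)-nP(x)\bigr)\log P(x)$. Using the typicality bound $|N(x|x^n)-nP(x)|\leq\delta\sqrt{n}\sqrt{P(x)(1-P(x))}\leq\delta\sqrt{n}\sqrt{P(x)}$ and the elementary inequality $\sqrt{P(x)}\,|\log P(x)|\leq\sqrt{P(x)}\log\frac{1}{P(x)}\leq\frac{2\log e}{e}=K$ (which follows from maximizing $t\mapsto -\sqrt t\log t$ on $[0,1]$, attained at $t=e^{-2}$), each of the $a$ summands is bounded by $K\delta\sqrt{n}$, giving the bound $Ka\delta\sqrt{n}$.

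The upper bound on $|\fset{T}^n_{P,\delta}|$ follows immediately: every $x^n\in\fset{T}^n_{P,\delta}$ has $P^{\otimes n}(x^n)\geq\exp(-nH(P)-Ka\delta\sqrt{n})$ by the previous step, and since these probabilities sum to at most $1$ over the set, the cardinality is at most $\exp(nH(P)+Ka\delta\sqrt{n})$. For the lower bound, combine the first and second assertions: $1-a/\delta^2\leq P^{\otimes n}(\fset{T}^n_{P,\delta})=\sum_{x^n\in\fset{T}^n_{P,\delta}}P^{\otimes n}(x^n)\leq|\fset{T}^n_{P,\delta}|\cdot\exp(-nH(P)+Ka\delta\sqrt{n})$, and rearrange.

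I do not anticipate a genuine obstacle here; the only point requiring mild care is the scalar inequality $\sqrt t\,\log(1/t)\leq K=2\log e/e$ underlying the choice of the constant $K$, and the bookkeeping to make sure the bound $a/\delta^2$ (rather than something weaker) comes out of the union bound — this just requires applying Chebyshev with the \emph{exact} variance $P(x)(1-P(x))$ that appears inside the definition of $\fset{T}^n_{P,\delta}$, so that the threshold is precisely $\delta$ standard deviations. Edge cases where $P(x)=0$ or $P(x)=1$ should be noted explicitly so that the terms $\log P(x)$ and the empty constraints are handled without comment later.
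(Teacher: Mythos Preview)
Your proposal is correct and matches the paper's approach exactly: the paper sketches the Chebyshev/union-bound argument for the first inequality and then defers the remaining three assertions to the quantum Lemma~\ref{lemma:variance:typical:proj}, whose proof is precisely the calculation you wrote out (including the bound $-\sqrt{t}\log t\leq K$), specialized to the commutative case. Your handling of the cardinality bounds from the pointwise probability estimate is the standard one and coincides with what Lemma~\ref{lemma:abstract:shadow:bound} yields in this setting.
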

\begin{proof}
  See~\cite{wolfowitz:coding}. Let us only indicate the proof of the
  first inequality: 
  $\fset{T}^n_{P,\delta}$ is the intersection of $a$ events, namely
  for each $x\in\fset{X}$ that the mean of the independent Bernoulli
  variables $X_i$ with value $1$ iff $x_i=x$ has a deviation from its
  expectation $P(x)$ at most $\delta\sqrt{P(x)(1-P(x))}/\sqrt{n}$.
  By Chebyshev's inequality each of these has probability at
  least $1-1/\delta^{2}$. The rest is in fact contained in
  Lemma~\ref{lemma:variance:typical:proj} below.
\end{proof}

\medskip
The following definitions are in close analogy to this.
\par
For a state $\rho$ choose a diagonalization $\rho=\sum_j R(j)\pi_j$
and observe that the eigenvalue list $R$ is a probability distribution,
with $H(\rho)=H(R)$. Thus we may define
$$\Pi^n_{\rho,\delta}=\sum_{j^n\in\fset{T}^n_{R,\delta}}
                                \pi_{j_1}\otimes\cdots\otimes\pi_{j_n},$$
the \emph{variance--typical projector of $\rho$ with constant $\delta$}.
It is to be distinguished from the typical projector introduced by Schumacher
in~\cite{schumacher:qucoding}, which we would rather call \emph{entropy typical}.
Observe that $\Pi^n_{\rho,\delta}$ may depend on the particular
diagonalization of $\rho$. This slight abuse of notation is no harm
in the sequel, as we always fix globally diagonalizations of the
states in consideration.
\par
Let us say that an operator $B$, $0\leq B\leq\1$, is an
$\eta$--\emph{shadow} of the state $\rho$ if $\tr(\rho B)\geq\eta$.
We then have

\medskip
\begin{lemma}[Typical projector]
  \label{lemma:variance:typical:proj}
  For every state $\rho$ and integral $n$
  $$\tr(\rho^{\otimes n}\Pi^n_{\rho,\delta})\geq 1-\frac{d}{\delta^2},$$
  and with $\Pi^n=\Pi^n_{\rho,\delta}$,
  \begin{equation*}\begin{split}
    \Pi^n\exp&\left(-nH(\rho)-Kd\delta\sqrt{n}\right)
                                       \leq \Pi^n\rho^{\otimes n}\Pi^n\leq\\
             &\quad\qquad\qquad\qquad
                        \leq\Pi^n\exp\left(-nH(\rho)+Kd\delta\sqrt{n}\right),
  \end{split}\end{equation*}
  \begin{align*}
    \tr\Pi^n_{\rho,\delta} &\leq\exp\left(nH(\rho)+Kd\delta\sqrt{n}\right), \\
    \tr\Pi^n_{\rho,\delta} &\geq\left(1-\frac{d}{\delta^2}\right)
                                   \exp\left(nH(\rho)+Kd\delta\sqrt{n}\right).
  \end{align*}
  Every $\eta$--shadow $B$ of $\rho^{\otimes n}$ satisfies
  $$\tr B\geq\left(\eta-\frac{d}{\delta^2}\right)
                        \exp\left(nH(\rho)-Kd\delta\sqrt{n}\right).$$
\end{lemma}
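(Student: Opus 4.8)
The plan is to transport Lemma~\ref{lemma:wolfowitz} through the spectral decomposition of $\rho$, and then extract the shadow bound by a short operator-inequality argument. First I would fix once and for all the diagonalization $\rho=\sum_j R(j)\pi_j$ used to define $\Pi^n=\Pi^n_{\rho,\delta}$, and set $\pi_{j^n}:=\pi_{j_1}\otimes\cdots\otimes\pi_{j_n}$ for index sequences $j^n=j_1\ldots j_n$. These are mutually orthogonal projectors summing to $\1$, with $\rho^{\otimes n}\pi_{j^n}=R^{\otimes n}(j^n)\,\pi_{j^n}$ where $R^{\otimes n}(j^n)=\prod_i R(j_i)$, and by definition $\Pi^n=\sum_{j^n\in\fset{T}^n_{R,\delta}}\pi_{j^n}$; in particular $\rho^{\otimes n}$, $\Pi^n$ and $\1-\Pi^n$ all commute. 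Since $R$ is a probability distribution on a set of size $d=\dim{\cal H}$ with $H(R)=H(\rho)$, Lemma~\ref{lemma:wolfowitz} applied to $R$ (with $a$ replaced by $d$) gives at once: $\tr(\rho^{\otimes n}\Pi^n)=R^{\otimes n}(\fset{T}^n_{R,\delta})\geq 1-d/\delta^2$; the two-sided eigenvalue estimate $\exp(-nH(\rho)-Kd\delta\sqrt{n})\leq R^{\otimes n}(j^n)\leq\exp(-nH(\rho)+Kd\delta\sqrt{n})$ for $j^n\in\fset{T}^n_{R,\delta}$, which is exactly the asserted sandwich for $\Pi^n\rho^{\otimes n}\Pi^n$ (a scalar inequality on the support of $\Pi^n$); and, via $\tr\Pi^n=|\fset{T}^n_{R,\delta}|$, the claimed upper and lower bounds on $\tr\Pi^n$. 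So the only genuinely new ingredient is the shadow estimate.

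For that, take $B$ with $0\leq B\leq\1$ and $\tr(\rho^{\otimes n}B)\geq\eta$. Using that $\Pi^n$ commutes with $\rho^{\otimes n}$, decompose $\rho^{\otimes n}=\Pi^n\rho^{\otimes n}\Pi^n+(\1-\Pi^n)\rho^{\otimes n}(\1-\Pi^n)$ into two positive summands, so that $\eta\leq\tr(\rho^{\otimes n}B)=\tr(\Pi^n\rho^{\otimes n}\Pi^n B)+\tr((\1-\Pi^n)\rho^{\otimes n}(\1-\Pi^n)B)$. The second trace is at most $\tr((\1-\Pi^n)\rho^{\otimes n}(\1-\Pi^n))=1-\tr(\rho^{\otimes n}\Pi^n)\leq d/\delta^2$, by the elementary inequality $\tr(\tau B)\leq\tr\tau$ valid for $\tau\geq 0$ and $B\leq\1$. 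Hence $\tr(\Pi^n\rho^{\otimes n}\Pi^n B)\geq\eta-d/\delta^2$. On the other hand, feeding in the upper half of the sandwich together with $B\geq 0$ and $\Pi^n\leq\1$ gives $\tr(\Pi^n\rho^{\otimes n}\Pi^n B)\leq\exp(-nH(\rho)+Kd\delta\sqrt{n})\,\tr(\Pi^n B)\leq\exp(-nH(\rho)+Kd\delta\sqrt{n})\,\tr B$. Comparing the two bounds on $\tr(\Pi^n\rho^{\otimes n}\Pi^n B)$ and rearranging yields $\tr B\geq(\eta-d/\delta^2)\exp(nH(\rho)-Kd\delta\sqrt{n})$, as claimed.

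I do not expect a real obstacle: all the substance lives in Lemma~\ref{lemma:wolfowitz}. One only has to be careful to record that a single family of product projectors diagonalizes $\rho^{\otimes n}$, $\Pi^n$ and $\1-\Pi^n$ simultaneously, so that every operator manipulation reduces to the scalar statements about $R^{\otimes n}$ on $\fset{T}^n_{R,\delta}$, and to invoke the two trivial monotonicity facts used above, namely $\tr(AB)\leq\tr(A'B)$ whenever $A\leq A'$ and $B\geq 0$, and $\tr(\tau B)\leq\tr\tau$ whenever $0\leq\tau$ and $B\leq\1$, both immediate from cyclicity of the trace and positivity of $B^{1/2}(A'-A)B^{1/2}$. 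The likeliest place to slip is mere bookkeeping --- keeping the many $\pm Kd\delta\sqrt{n}$ signs straight and matching them against the corresponding signs in Lemma~\ref{lemma:wolfowitz} --- but this is routine. The fact that $\Pi^n_{\rho,\delta}$ depends on the chosen diagonalization of $\rho$ is harmless, exactly as noted before the statement.
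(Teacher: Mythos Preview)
Your argument is correct and follows the same route as the paper: reduce everything to the classical Lemma~\ref{lemma:wolfowitz} via the spectral decomposition of $\rho$, and then derive the shadow estimate from the sandwich inequality. Two small points of comparison. First, be aware that in the paper's organization the non-Chebyshev parts of Lemma~\ref{lemma:wolfowitz} are not proved independently but are deferred to the proof of the present lemma (the computation $|\log R^{\otimes n}(j^n)+nH(R)|\leq Kd\delta\sqrt{n}$ is carried out explicitly there, using $-\sqrt{r}\log\sqrt{r}\leq(\log e)/e$); so your phrase ``all the substance lives in Lemma~\ref{lemma:wolfowitz}'' is morally right but literally circular as the paper is written --- you should simply insert that three-line computation. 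Second, for the shadow bound the paper invokes the separately stated Lemma~\ref{lemma:abstract:shadow:bound}, which bounds $\tr((\rho-\sqrt{\Lambda}\rho\sqrt{\Lambda})B)$ via the trace norm; your inline argument, splitting $\rho^{\otimes n}=\Pi^n\rho^{\otimes n}\Pi^n+(\1-\Pi^n)\rho^{\otimes n}(\1-\Pi^n)$ and using only $0\leq B\leq\1$, is the commuting special case of that lemma and is arguably cleaner since it avoids the trace norm altogether. The paper's version has the advantage of being stated abstractly so it can be reused (e.g.\ in Lemma~\ref{lemma:conditional:variance:typical:proj}).
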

\begin{proof}
  The first estimate is the Chebyshev inequality, as before: observe that
  $$\tr(\rho^{\otimes n}\Pi^n_{\rho,\delta})=R^{\otimes n}(\fset{T}^n_{R,\delta}).$$
  The second formula is the key: to prove it let
  $\pi^n=\pi_{j_1}\otimes\cdots\otimes\pi_{j_n}$
  one of the eigenprojections of $\rho^{\otimes n}$ contributing to
  $\Pi^n_{\rho,\delta}$. Then
  $$\tr(\rho^{\otimes n}\pi^n)=R(j_1)\cdots R(j_n)=\prod_j R(j)^{N(j|j^n)}\ .$$
  Taking logs and using the defining relation for the $N(j|j^n)$
  we find
  \begin{equation*}\begin{split}
    |\log\tr(\rho^{\otimes n}\pi^n) \!-\! nH(\rho)|
                           &=  \!\left|\sum_j \!\!-N(j|j^n)\!\log\!{R(j)} \!-\! nH(R)\right| \\
                           &\leq       \sum_j \!\!-\log{R(j)}|N(j|j^n) \!-\! nR(j)|\\
                           &\leq \sum_j -\delta\sqrt{n}\sqrt{R(j)}\log{R(j)}\\
                           &=    2\delta\sqrt{n}\sum_j -\sqrt{R(j)}\log{\sqrt{R(j)}}\\
                           &\leq 2\frac{\log e}{e}d\delta\sqrt{n}\ .
  \end{split}\end{equation*}
  The rest follows from the following Lemma~\ref{lemma:abstract:shadow:bound}.
\end{proof}

\medskip
\begin{lemma}[Shadow bound]
  \label{lemma:abstract:shadow:bound}
  Let $0\leq\Lambda\leq\1$ and $\rho$ a state commuting with $\Lambda$
  such that for some $\lambda,\mu_1,\mu_2>0$
  $$\tr(\rho\Lambda)\geq 1-\lambda\text{ and }
                  \mu_1\Lambda\leq\sqrt{\Lambda}\rho\sqrt{\Lambda}\leq\mu_2\Lambda.$$
  Then $(1-\lambda)\mu_2^{-1}\leq\tr\Lambda\leq\mu_1^{-1}$,
  and for an $\eta$--shadow $B$ of $\rho$ one has
  $\tr B\geq\left(\eta-\lambda\right)\mu_2^{-1}$.
\end{lemma}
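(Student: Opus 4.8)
The statement is an abstract, commutative-algebra version of the typical-projector estimates, so the plan is to reduce everything to ordinary scalar inequalities by working in a joint eigenbasis of $\rho$ and $\Lambda$. Since $\rho$ commutes with $\Lambda$, and $0\leq\Lambda\leq\1$, I would first diagonalize simultaneously: write $\Lambda=\sum_i \ell_i P_i$ with $0\leq\ell_i\leq 1$ and $\rho=\sum_i r_i P_i$ with $r_i\geq 0$, $\sum_i r_i=1$, the $P_i$ mutually orthogonal rank-one projectors. In this basis $\sqrt{\Lambda}\rho\sqrt{\Lambda}=\sum_i \ell_i r_i P_i$, so the hypothesis $\mu_1\Lambda\leq\sqrt{\Lambda}\rho\sqrt{\Lambda}\leq\mu_2\Lambda$ says exactly that $\mu_1\ell_i\leq \ell_i r_i\leq\mu_2\ell_i$ for every $i$, i.e. on the support of $\Lambda$ (those $i$ with $\ell_i>0$) we have $\mu_1\leq r_i\leq\mu_2$.

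For the bounds on $\tr\Lambda=\sum_i\ell_i$, the upper bound uses $\mu_1\Lambda\leq\sqrt{\Lambda}\rho\sqrt{\Lambda}$: taking the trace gives $\mu_1\tr\Lambda\leq\tr(\rho\Lambda)\leq\tr\rho=1$, hence $\tr\Lambda\leq\mu_1^{-1}$. For the lower bound, $\sqrt{\Lambda}\rho\sqrt{\Lambda}\leq\mu_2\Lambda$ gives $\tr(\rho\Lambda)\leq\mu_2\tr\Lambda$, and combined with $\tr(\rho\Lambda)\geq 1-\lambda$ this yields $\tr\Lambda\geq(1-\lambda)\mu_2^{-1}$.

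For the shadow bound, let $B$ with $0\leq B\leq\1$ satisfy $\tr(\rho B)\geq\eta$. The idea is to split $\tr(\rho B)$ according to whether we are inside the support of $\Lambda$ or not. Writing $\Lambda^0$ for the projector onto $\supp\Lambda$, we have $\tr(\rho B)=\tr(\rho\Lambda^0 B)+\tr(\rho(\1-\Lambda^0)B)$. The second term is at most $\tr(\rho(\1-\Lambda^0))\leq\tr(\rho(\1-\Lambda))\leq\lambda$ (here using $\Lambda\leq\Lambda^0$, hence $\1-\Lambda^0\leq\1-\Lambda$, and $\tr(\rho\Lambda)\geq 1-\lambda$), and $0\leq B\leq\1$. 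So the first term is at least $\eta-\lambda$. But on $\supp\Lambda$ we have $\rho\leq\mu_2\Lambda^0\leq\mu_2\1$ in the relevant corner, more precisely $\Lambda^0\rho\Lambda^0\leq\mu_2\Lambda^0$; consequently $\tr(\rho\Lambda^0 B)=\tr(\Lambda^0\rho\Lambda^0 B)\leq\mu_2\tr(\Lambda^0 B)\leq\mu_2\tr B$, using again $B\geq 0$ and $\Lambda^0\leq\1$. Chaining the inequalities gives $\eta-\lambda\leq\mu_2\tr B$, i.e. $\tr B\geq(\eta-\lambda)\mu_2^{-1}$, as claimed.

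The only mild subtlety—and the one place I would be careful in writing it up—is the handling of the support projector $\Lambda^0$ versus $\Lambda$ itself: the hypothesis is phrased with $\sqrt{\Lambda}\rho\sqrt{\Lambda}$, which kills the kernel of $\Lambda$, so one must check that $\mu_1\Lambda\leq\sqrt{\Lambda}\rho\sqrt{\Lambda}$ really does force $r_i\geq\mu_1$ on every eigenline with $\ell_i>0$ (it does, since there $\ell_i r_i\geq\mu_1\ell_i$ and $\ell_i>0$), and similarly $r_i\leq\mu_2$ there, which is what legitimizes $\Lambda^0\rho\Lambda^0\leq\mu_2\Lambda^0$. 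Once that bookkeeping is in place everything else is a one-line trace estimate, so I expect no real obstacle.
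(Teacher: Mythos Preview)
Your proof is correct. The bounds on $\tr\Lambda$ are obtained exactly as in the paper, by taking traces in the operator sandwich and using $1-\lambda\leq\tr(\rho\Lambda)\leq 1$.

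For the shadow bound your argument differs slightly from the paper's. You pass to the support projector $\Lambda^0$, use the spectral consequence $\Lambda^0\rho\Lambda^0\leq\mu_2\Lambda^0$ of the hypothesis, and split $\tr(\rho B)$ into an on--support piece (bounded above by $\mu_2\tr B$) and an off--support piece (bounded above by $\lambda$). The paper instead works directly with $\Lambda$: it writes
\[
\mu_2\tr B\ \geq\ \tr(\mu_2\Lambda B)\ \geq\ \tr\!\left(\sqrt{\Lambda}\rho\sqrt{\Lambda}\,B\right)
   \ =\ \tr(\rho B)-\tr\!\left((\rho-\sqrt{\Lambda}\rho\sqrt{\Lambda})B\right),
\]
and then bounds the last term by $\|\rho-\sqrt{\Lambda}\rho\sqrt{\Lambda}\|_1$, which equals $\tr(\rho(\1-\Lambda))\leq\lambda$ because commutativity makes $\rho-\sqrt{\Lambda}\rho\sqrt{\Lambda}=\rho(\1-\Lambda)\geq 0$. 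Both routes are short and rest on the same two ingredients (the upper sandwich bound and the mass--defect $\leq\lambda$); your version is a touch more explicit about the spectral picture, while the paper's avoids introducing $\Lambda^0$ and phrases the off--support loss as a single trace--norm estimate.
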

\begin{proof}
  The bounds on $\tr\Lambda$ follow by taking traces in the inequalities
  in $\sqrt{\Lambda}\rho\sqrt{\Lambda}$ and using
  $1-\lambda\leq\tr(\rho\Lambda)\leq 1$.
  For the $\eta$--shadow $B$ observe
  \begin{equation*}\begin{split}
    \mu_2\tr B
       &\geq \tr\left(\mu_2\Lambda B\right)
           \geq \tr\left(\sqrt{\Lambda}\rho\sqrt{\Lambda} B\right) \\
       &=\tr(\rho B)-\tr\left(\left(\rho-\sqrt{\Lambda}\rho\sqrt{\Lambda}\right)B\right) \\
       &\geq \eta-\left\|\rho-\sqrt{\Lambda}\rho\sqrt{\Lambda}\right\|_1 .
  \end{split}\end{equation*}
  Since the trace norm can obviously be estimated
  by $\lambda$ we are done.
\end{proof}

\medskip
Fix now diagonalizations
$W_x=\sum_{j} W(j|x)\pi_{xj}$ (where $W(\cdot|\cdot)$ is a stochastic matrix,
the double meaning of $W$ should be no serious ambiguity).
Then define the \emph{conditional variance--typical projector of $W$ given $x^n$
with constant $\delta$} to be
$$\Pi^n_{W,\delta}(x^n)=\bigotimes_{x\in\fset{X}} \Pi^{I_x}_{W_x,\delta},$$
where $I_x=\{i\in[n]:x_i=x\}$. With the convention
$W_{x^n}=W_{x_1}\otimes\cdots\otimes W_{x_n}$ we now have

\medskip
\begin{lemma}[Conditional typical projector]
  \label{lemma:conditional:variance:typical:proj}
  For all $x^n\in\fset{X}^n$ of type $P$
  $$\tr(W_{x^n}\Pi^n_{W,\delta}(x^n))\geq 1-\frac{ad}{\delta^2}\ ,$$
  and with $\Pi^n=\Pi^n_{W,\delta}(x^n)$
  \begin{equation*}\begin{split}
    \Pi^n\exp&\left(-nH(W|P)-Kd\sqrt{a}\delta\sqrt{n}\right)
                                         \leq\Pi^n W_{x^n}\Pi^n\leq \\
             &\qquad\qquad\leq \Pi^n\exp\left(-nH({W}|P)+Kd\sqrt{a}\delta\sqrt{n}\right),
  \end{split}\end{equation*}
  \begin{align*}
    \tr\Pi^n_{W,\delta}(x^n) &\leq\exp\left(nH(W|P)+Kd\sqrt{a}\delta\sqrt{n}\right),\\
    \tr\Pi^n_{W,\delta}(x^n) &\geq\left(1-\frac{ad}{\delta^2}\right)
                                       \exp\left(nH(W|P)-Kd\sqrt{a}\delta\sqrt{n}\right).
  \end{align*}
  Every $\eta$--shadow $B$ of $W_{x^n}$ satisfies
  $$\tr B\geq\left(\eta-\frac{ad}{\delta^2}\right)
                                    \exp\left(nH({W}|P)-Kd\sqrt{a}\delta\sqrt{n}\right).$$
\end{lemma}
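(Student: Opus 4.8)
The plan is to reduce the entire statement to Lemma~\ref{lemma:variance:typical:proj}, applied separately to each state $W_x$ with block length $nP(x)$, exploiting the product structure of both objects involved. Since $x^n$ has type $P$ we have $|I_x|=N(x|x^n)=nP(x)$, and collecting the tensor legs of $W_{x^n}=W_{x_1}\otimes\cdots\otimes W_{x_n}$ according to which letter occupies each position, one fixed permutation of the factors identifies
\[
  W_{x^n}\ \cong\ \bigotimes_{x\in\fset{X}}W_x^{\otimes nP(x)},
  \qquad
  \Pi^n_{W,\delta}(x^n)\ =\ \bigotimes_{x\in\fset{X}}\Pi^{nP(x)}_{W_x,\delta}.
\]
In particular $W_{x^n}$ commutes with $\Pi^n_{W,\delta}(x^n)$, because each $W_x^{\otimes nP(x)}$ is diagonal in the same product eigenbasis as $\Pi^{nP(x)}_{W_x,\delta}$.

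For the first inequality I would start from the elementary operator bound $\1-\bigotimes_x\Pi_x\le\sum_x(\1-\Pi_x)$, valid for any operators $0\le\Pi_x\le\1$ with the implicit identity padding (proved by telescoping), take the trace against $W_{x^n}$, use the factorization just recorded, and bound each summand by the first estimate of Lemma~\ref{lemma:variance:typical:proj}: $1-\tr\bigl(W_x^{\otimes nP(x)}\Pi^{nP(x)}_{W_x,\delta}\bigr)\le d/\delta^2$. Summing over the at most $a$ letters gives $1-\tr\bigl(W_{x^n}\Pi^n_{W,\delta}(x^n)\bigr)\le ad/\delta^2$.

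For the sandwich relation I would apply the second (key) formula of Lemma~\ref{lemma:variance:typical:proj} to each $W_x$ with block length $nP(x)$ and then tensor the resulting two-sided inequalities together; this is legitimate because all operators occurring ($\Pi^{nP(x)}_{W_x,\delta}$ scaled by positive exponentials, and $\Pi^{nP(x)}_{W_x,\delta}W_x^{\otimes nP(x)}\Pi^{nP(x)}_{W_x,\delta}$) are positive, and the tensor product of positive operators respects the order $\le$. The exponents add: $\sum_x nP(x)H(W_x)=nH(W|P)$, while the error exponents satisfy $Kd\delta\sum_x\sqrt{nP(x)}=Kd\delta\sqrt n\sum_x\sqrt{P(x)}\le Kd\delta\sqrt n\,\sqrt a$ by Cauchy--Schwarz, and since $\bigotimes_x\Pi^{nP(x)}_{W_x,\delta}=\Pi^n_{W,\delta}(x^n)$ this is exactly the asserted two-sided bound with constant $Kd\sqrt a$.

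Finally, the two trace estimates and the shadow bound drop out of Lemma~\ref{lemma:abstract:shadow:bound} applied with $\Lambda=\Pi^n_{W,\delta}(x^n)$, $\rho=W_{x^n}$ (which commute, as noted), $\lambda=ad/\delta^2$, and $\mu_{1,2}=\exp\bigl(-nH(W|P)\mp Kd\sqrt a\,\delta\sqrt n\bigr)$: since $\Lambda$ is a projector, $\sqrt\Lambda\rho\sqrt\Lambda=\Lambda\rho\Lambda=\Pi^n W_{x^n}\Pi^n$, so the hypotheses of that lemma are precisely the two bounds already established, and its conclusion yields $(1-\lambda)\mu_2^{-1}\le\tr\Lambda\le\mu_1^{-1}$ together with $\tr B\ge(\eta-\lambda)\mu_2^{-1}$ for every $\eta$--shadow $B$ of $W_{x^n}$. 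None of these steps is genuinely hard; the only places calling for a little care are the ``union bound'' operator inequality behind the first estimate and the Cauchy--Schwarz step $\sum_x\sqrt{P(x)}\le\sqrt a$, which is exactly what produces the factor $\sqrt a$ — rather than $a$ — in the exponents.
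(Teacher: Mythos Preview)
Your proposal is correct and follows exactly the route the paper sketches: apply Lemma~\ref{lemma:variance:typical:proj} separately to each block $I_x$, combine via the product structure using $\sum_x\sqrt{P(x)}\le\sqrt a$ for the exponents, and finish with Lemma~\ref{lemma:abstract:shadow:bound}. You have simply filled in the details the paper omits (the telescoping union bound for the first estimate and the tensoring of the sandwich inequalities), but the strategy is identical.
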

\begin{proof}
  The first estimate follows simply by applying
  Lemma~\ref{lemma:variance:typical:proj} $a$ times,
  the second formula is by piecing together the corresponding
  formulae from Lemma~\ref{lemma:variance:typical:proj},
  using $\sum_{x\in\fset{X}}\sqrt{P(x)}\leq\sqrt{a}$.
  The rest is by the shadow bound Lemma~\ref{lemma:abstract:shadow:bound}.
\end{proof}

\medskip
We need a last result on the behaviour of $W_{x^n}$
under a typical projector:
\begin{lemma}[Weak law of large numbers]
  \label{lemma:weak:law}
  Let $x^n\in\fset{X}^n$ of type $P$. Then
  $$\tr(W_{x^n}\Pi^n_{PW,\delta\sqrt{a}})\geq 1-\frac{ad}{\delta^2}.$$
\end{lemma}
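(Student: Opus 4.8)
The plan is to reduce the inequality, just as for the first estimates of Lemmas~\ref{lemma:wolfowitz}--\ref{lemma:conditional:variance:typical:proj}, to a Chebyshev bound on a counting function; the one genuinely new feature is that the underlying random variables will be independent but \emph{not} identically distributed, so Lemma~\ref{lemma:wolfowitz} cannot be invoked verbatim.

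First I would fix the globally chosen diagonalization $PW=\sum_k\overline{R}(k)\pi_k$ of the average state --- so $\overline{R}$ is its eigenvalue list, $k$ running over $d$ values, the $\pi_k$ rank one, and $\Pi^n_{PW,\delta\sqrt{a}}=\sum_{k^n\in\fset{T}^n_{\overline{R},\delta\sqrt{a}}}\pi_{k_1}\otimes\cdots\otimes\pi_{k_n}$ --- and put $Q(k|x)=\tr(W_x\pi_k)$. This $Q(\cdot|\cdot)$ is a stochastic matrix, and since $\pi_k$ is an eigenprojection of $PW$ one has the key identity $\sum_{x\in\fset{X}}P(x)Q(k|x)=\tr\bigl((PW)\pi_k\bigr)=\overline{R}(k)$. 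As $W_{x^n}$ and the $\pi_{k_i}$ are tensor products, $\tr\bigl(W_{x^n}\Pi^n_{PW,\delta\sqrt{a}}\bigr)=\sum_{k^n\in\fset{T}^n_{\overline{R},\delta\sqrt{a}}}\prod_{i=1}^n Q(k_i|x_i)$; that is, it is the probability that $k^n\in\fset{T}^n_{\overline{R},\delta\sqrt{a}}$ under the product distribution $\mu$ whose $i$-th marginal is $Q(\cdot|x_i)$.

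Next I would apply Chebyshev's inequality to each of the $d$ coordinate counting functions $N(k|k^n)$. Under $\mu$ this is a sum of $n$ independent Bernoulli variables with means $Q(k|x_i)$; grouping the summands by the letter $x_i$ and using that $x^n$ has type $P$ together with the key identity makes the mean exactly $n\overline{R}(k)$, while independence and the concavity of $t\mapsto t(1-t)$ (Jensen with respect to $P$) bound the variance by $n\overline{R}(k)\bigl(1-\overline{R}(k)\bigr)$ --- precisely the quantity appearing in the definition of $\fset{T}^n_{\overline{R},\delta\sqrt{a}}$. Chebyshev then gives failure probability at most $1/(a\delta^2)$ for each $k$, and a union bound over the $d$ coordinates yields $\tr\bigl(W_{x^n}\Pi^n_{PW,\delta\sqrt{a}}\bigr)\geq 1-d/(a\delta^2)\geq 1-ad/\delta^2$, as claimed.

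The places I would expect to need the most care are the bookkeeping that turns "$x^n$ has type $P$" into the exact mean $n\overline{R}(k)$, and the degenerate eigenvalues $\overline{R}(k)\in\{0,1\}$: for such $k$ the variance bound is $0$ and $N(k|k^n)$ is $\mu$-almost surely constant --- equal to $0$ if $\overline{R}(k)=0$ and to $n$ if $\overline{R}(k)=1$, since then $Q(k|x_i)$ is forced to $0$, resp.\ to $1$, for every letter $x_i$ occurring in $x^n$ --- so the corresponding constraint holds with certainty and may simply be dropped from the union bound. No operator-theoretic input beyond the reduction above is needed; everything else is the elementary probability already used in the preceding lemmas.
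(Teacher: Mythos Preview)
Your argument is correct and in fact yields the sharper bound $1-d/(a\delta^2)$ rather than $1-ad/\delta^2$. The paper makes the same reduction to the classical channel $Q(k|x)=\tr(W_x\pi_k)$ (written there as $\kappa_*W$, with $\kappa_*$ the pinching onto the eigenbasis of $PW$), but then organizes the probability estimate differently: instead of applying Chebyshev directly to the independent, non--identically distributed coordinates as you do, it first proves the projector inequality $\Pi^n_{PW,\delta\sqrt{a}}\geq\Pi^n_{\kappa_*W,\delta}(x^n)$ and then invokes Lemma~\ref{lemma:conditional:variance:typical:proj} for the channel $\kappa_*W$. Establishing that projector inequality costs an extra Cauchy--Schwarz step (to extract the factor $\sqrt{a}$) on top of the concavity of $t\mapsto t(1-t)$ that you also use, and the detour through Lemma~\ref{lemma:conditional:variance:typical:proj} applies Chebyshev block by block, which together accounts for the lost factor $a^2$. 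Your route is shorter and self--contained; the paper's has the merit of exhibiting the operator inequality $\Pi^n_{PW,\delta\sqrt{a}}\geq\Pi^n_{\kappa_*W,\delta}(x^n)$ explicitly, which may be of independent use.
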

\begin{proof}
  Diagonalize $PW=\sum_j q_j\pi_j$, and let the quantum operation
  $\kappa_*:\alg{L}({\cal H})_*\rightarrow\alg{L}({\cal H})_*$
  be defined by $\kappa_*(\sigma)=\sum_j \pi_j\sigma\pi_j$.
  We claim that
  $$\Pi^n_{PW,\delta\sqrt{a}}\geq \Pi^n_{\kappa_*W,\delta}(x^n).$$
  Indeed let $\pi^n=\pi_{j_1}\otimes\cdots\otimes\pi_{j_n}$
  one of the product states constituting
  $\bigotimes_{x\in\fset{X}} \Pi^{I_x}_{\kappa_*W_x,\delta}$, i.e.
  with $\kappa_*W_x=\sum_j q_{j|x}\pi_j$,
  $$\forall x\!\in\!\fset{X}\,\forall j\ \left| N(j|j^{I_x})-q_{j|x}|I_x| \right|\leq
                                           \delta\sqrt{|I_x|}\sqrt{q_{j|x}(1-q_{j|x})}\ .$$
  Hence (using $|I_x|=P(x)n$)
  \begin{equation*}\begin{split}
    | N(j|j^n)-q_jn | &\leq\sum_{x\in\fset{X}} \left| N(j|j^{I_x})-q_{j|x}|I_x| \right| \\
               &\leq\sum_{x\in\fset{X}} \delta\sqrt{n}\sqrt{P(x)}\sqrt{q_{j|x}(1-q_{j|x})}\\
               &\leq\delta\sqrt{n}\sqrt{a}\sqrt{\sum_{x\in\fset{X}} P(x)q_{j|x}(1-q_{j|x})}\\
               &\leq\delta\sqrt{n}\sqrt{a}\sqrt{q_j(1-q_j)}\ ,
  \end{split}\end{equation*}
  the last inequality by concavity of the map $x\mapsto x(1-x)$,
  and $q_j=\sum_{x\in\fset{X}} P(x)q_{j|x}$. Hence $\pi^n$ occurs
  in the sum for $\Pi^n_{PW,\delta\sqrt{a}}$, and our claim is proved.
  \par
  Thus we can estimate
  \begin{equation*}\begin{split}
    \tr(W_{x^n}\Pi^n_{PW,\delta\sqrt{a}})
            &=\tr\left((\kappa_*^{\otimes n}W_{x^n})\Pi^n_{PW,\delta\sqrt{a}}\right)\\
            &\geq \tr\left((\kappa_*^{\otimes n}W_{x^n})\Pi^n_{\kappa_*W,\delta}(x^n)\right)\\
            &\geq 1-\frac{ad}{\delta^2}\ ,
  \end{split}\end{equation*}
  the last line by Lemma~\ref{lemma:conditional:variance:typical:proj}.
\end{proof}

\section{On good measurements}
\label{sec:measurement}
We start with a short consideration of \emph{fidelity}:
\par
Assume in the following that $\rho$ is a pure state, $\sigma$ may be mixed.
We want to compare the trace norm distance
$D(\rho,\sigma)=\frac{1}{2}\|\rho-\sigma\|_1$, and
the (pure state) fidelity $F(\rho,\sigma)=\tr(\rho\sigma)$.

\medskip
\begin{lemma}[Pure state]
  \label{lemma:pure:state}
  Let $\rho=\ket{\psi}\bra{\psi}$ and $\sigma=\ket{\phi}\bra{\phi}$ pure states. Then
  $$1-F(\rho,\sigma)=D(\rho,\sigma)^2\ .$$
\end{lemma}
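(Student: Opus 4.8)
The plan is to exploit that the difference operator $\Delta = \rho - \sigma$ is highly constrained: it is self-adjoint, traceless, and of rank at most $2$. First I would record that $\tr\Delta = \tr\rho - \tr\sigma = 1 - 1 = 0$, and that the range of $\Delta$ is contained in the span of $\ket{\psi}$ and $\ket{\phi}$, so $\operatorname{rank}\Delta \le 2$. A traceless self-adjoint operator of rank at most $2$ has spectrum $\{t,-t\}$ for a single $t \ge 0$, with all remaining eigenvalues $0$. Hence $\|\Delta\|_1 = 2t$ and therefore $D(\rho,\sigma) = \frac12\|\Delta\|_1 = t$.

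The second step is to compute $\tr(\Delta^2)$ in two different ways. From the eigenvalue list above, $\tr(\Delta^2) = 2t^2 = 2D(\rho,\sigma)^2$. Expanding directly, $\tr(\Delta^2) = \tr(\rho^2) + \tr(\sigma^2) - 2\tr(\rho\sigma)$; since $\rho$ and $\sigma$ are pure we have $\tr(\rho^2) = \tr(\sigma^2) = 1$, and $\tr(\rho\sigma) = F(\rho,\sigma)$ by definition, so $\tr(\Delta^2) = 2\bigl(1 - F(\rho,\sigma)\bigr)$. Equating the two expressions and dividing by $2$ yields $D(\rho,\sigma)^2 = 1 - F(\rho,\sigma)$, as desired.

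I do not expect a genuine obstacle here; the one place to be a little careful is the spectral claim for $\Delta$, namely that tracelessness forces the (at most two) nonzero eigenvalues to be negatives of each other. An alternative, more pedestrian route avoids even this: after absorbing a global phase one may assume $\langle\psi|\phi\rangle \ge 0$ and write $\ket{\phi} = \cos\theta\,\ket{\psi} + \sin\theta\,\ket{e}$ with $\ket{e}$ a unit vector orthogonal to $\ket{\psi}$; writing out the $2\times 2$ matrix of $\Delta$ on the span of $\ket{\psi}$ and $\ket{e}$ one reads off eigenvalues $\pm\sin\theta$, whence $D(\rho,\sigma) = |\sin\theta|$ while $F(\rho,\sigma) = \cos^2\theta$, and the identity follows. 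I would present the trace computation as the main argument since it is coordinate-free.
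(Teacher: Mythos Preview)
Your proof is correct. The main argument via $\tr(\Delta^2)$ is a genuinely different route from the paper's. The paper proceeds exactly along the lines of what you call the ``pedestrian'' alternative: it writes $\ket{\psi}=\alpha\ket{0}+\beta\ket{1}$ and $\ket{\phi}=\alpha\ket{0}-\beta\ket{1}$ (a symmetric parametrization equivalent to your $\cos\theta/\sin\theta$ one after a change of basis), then computes $F=(|\alpha|^2-|\beta|^2)^2$ and $D=2|\alpha\beta|$ by hand and checks the identity algebraically. Your trace argument is cleaner and coordinate-free: it sidesteps the need to justify that the two states can always be put into such a symmetric form, and the only nontrivial input---that a traceless self-adjoint rank-$\leq 2$ operator has spectrum $\{t,-t,0,\dots,0\}$---is itself immediate. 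The paper's approach, on the other hand, has the minor advantage of making the eigenvalues of $\rho-\sigma$ explicit (they are $\pm 2|\alpha\beta|$), which can be handy elsewhere.
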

\begin{proof}
  We may assume $\ket{\psi}=\alpha\ket{0}+\beta\ket{1}$ and
  $\ket{\phi}=\alpha\ket{0}-\beta\ket{1}$, with $|\alpha|^2+|\beta|^2=1$.
  A straightforward calculation shows
  $F=(|\alpha|^2-|\beta|^2)^2$, and $D=2|\alpha\beta|$. Now
  \begin{equation*}\begin{split}
    1-F &=1-(|\alpha|^2-|\beta|^2)^2\\
        &=(1+|\alpha|^2-|\beta|^2)(1-|\alpha|^2+|\beta|^2)\\
        &=4|\alpha\beta|^2=D^2\ .
  \end{split}\end{equation*}
\end{proof}

\medskip
\begin{lemma}[Mixed state]
  \label{lemma:mixed:state}
  Let $\sigma$ an arbitrary mixed state (and $\rho$ pure as above). Then
  $$D\geq 1-F\geq D^2\ .$$
\end{lemma}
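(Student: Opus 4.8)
The plan is to prove the two inequalities $D \ge 1-F$ and $1-F \ge D^2$ separately, in each case reducing the mixed-state statement to the pure-state identity of Lemma~\ref{lemma:pure:state} by using a spectral decomposition of $\sigma$. Write $\sigma = \sum_k s_k \ket{\phi_k}\bra{\phi_k}$ with $s_k \ge 0$, $\sum_k s_k = 1$. Since $\rho = \ket{\psi}\bra{\psi}$ is pure, $F(\rho,\sigma) = \tr(\rho\sigma) = \sum_k s_k |\langle\psi|\phi_k\rangle|^2 = \sum_k s_k F(\rho,\ket{\phi_k}\bra{\phi_k})$, so fidelity is \emph{affine} in $\sigma$ for fixed pure $\rho$. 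This is the structural fact that makes both bounds go through.

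For the lower bound $1-F \ge D^2$: using affineness of $F$ and then convexity of $t \mapsto t^2$ (via Jensen, with weights $s_k$), I would write $1-F = \sum_k s_k\bigl(1-F(\rho,\ket{\phi_k}\bra{\phi_k})\bigr) = \sum_k s_k D(\rho,\ket{\phi_k}\bra{\phi_k})^2 \ge \bigl(\sum_k s_k D(\rho,\ket{\phi_k}\bra{\phi_k})\bigr)^2$, where the middle equality is exactly Lemma~\ref{lemma:pure:state}. Then I finish with convexity of the trace distance $D(\rho,\cdot) = \frac12\|\rho-\cdot\|_1$: $\sum_k s_k D(\rho,\ket{\phi_k}\bra{\phi_k}) \ge D(\rho,\sum_k s_k \ket{\phi_k}\bra{\phi_k}) = D$, which holds by the triangle inequality for $\|\cdot\|_1$. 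Squaring (everything being nonnegative) gives $1-F \ge D^2$.

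For the upper bound $D \ge 1-F$: here I would use the variational formula for the trace norm recalled in Section~\ref{sec:conventions}, $\|\rho-\sigma\|_1 = \max_{-\1\le B\le\1}\tr\bigl((\rho-\sigma)B\bigr)$, and simply test with $B = \1 - 2\rho$, which satisfies $-\1 \le B \le \1$ since $0 \le \rho \le \1$. This yields $2D = \|\rho-\sigma\|_1 \ge \tr\bigl((\rho-\sigma)(\1-2\rho)\bigr) = \tr(\rho) - 2\tr(\rho^2) - \tr(\sigma) + 2\tr(\sigma\rho) = 1 - 2 - 1 + 2F = 2(F - 1) \cdot(-1)$; more carefully, $\tr\rho = 1$, $\tr\rho^2 = 1$ (as $\rho$ is pure), $\tr\sigma = 1$, $\tr(\sigma\rho) = F$, so the expression equals $1 - 2 - 1 + 2F = 2F - 2$, which is the wrong sign — so instead I test with $B = 2\rho - \1$, giving $2D \ge 2 - 2F$, i.e. $D \ge 1-F$. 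Alternatively one can route this through Lemma~\ref{lemma:pure:state} plus convexity of $D$ and concavity of $\sqrt{\cdot}$: $D \le \sum_k s_k D(\rho,\ket{\phi_k}\bra{\phi_k})$ is the wrong direction, so the variational-formula argument is cleaner and I would use that.

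The only subtle point — and the step I would be most careful about — is getting the sign right in the test operator for the upper bound, and making sure the $B$ chosen genuinely lies in $[-\1,\1]$; for a pure $\rho$ this is immediate since $\rho$ is a projector, so $2\rho-\1$ has eigenvalues in $\{-1,+1\}$. Everything else (affineness of $F$ in the mixed argument against pure $\rho$, convexity of $\|\cdot\|_1$, Jensen for $t\mapsto t^2$) is routine, and the pure-state case is fully supplied by Lemma~\ref{lemma:pure:state}.
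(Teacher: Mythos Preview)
Your argument is correct. For the inequality $1-F\ge D^2$ you reproduce the paper's proof verbatim: affineness of $F(\rho,\cdot)$ against a pure $\rho$, Lemma~\ref{lemma:pure:state}, Jensen for $t\mapsto t^2$, and convexity of $D(\rho,\cdot)$ via the triangle inequality.

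For $D\ge 1-F$ you take a different route. The paper introduces the pinching $\kappa_*(\sigma)=\rho\sigma\rho+(\1-\rho)\sigma(\1-\rho)$ and invokes monotonicity of $\|\cdot\|_1$ under quantum operations to get $2D\ge\|\rho-\kappa_*\sigma\|_1$, then computes the right-hand side as $2(1-F)$ because the two pieces live on orthogonal supports. You instead plug the test operator $B=2\rho-\1$ into the variational formula $\|\rho-\sigma\|_1=\max_{-\1\le B\le\1}\tr((\rho-\sigma)B)$ from Section~\ref{sec:conventions}. Since $\rho$ is a projector, $B$ has spectrum $\{\pm 1\}$, and the trace evaluates directly to $2(1-F)$. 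This is slightly more elementary: it avoids appealing to monotonicity under completely positive maps and uses only facts already recorded in the paper. The two arguments are close cousins, though, since your $B=2\rho-\1=\rho-(\1-\rho)$ is exactly the sign operator of the pinched difference $\kappa_*(\rho-\sigma)$; you have in effect short-circuited the paper's two-step argument into one line. (In a clean write-up, drop the false start with $\1-2\rho$ and go straight to $B=2\rho-\1$.)
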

\begin{proof}
  Write $\sigma=\sum_j q_j\pi_j$ with pure states $\pi_j$. Then
  \begin{equation*}\begin{split}
    1-F(\rho,\sigma) &=\sum_j q_j\left(1-F(\rho,\pi_j)\right)=\sum_j q_jD(\rho,\pi_j)^2\\
                     &\geq \left(\sum_j q_jD(\rho,\pi_j)\right)^{\!\! 2}
                        \geq D(\rho,\sigma)^2\ .
  \end{split}\end{equation*}
  Conversely: extend $\rho$ to the observable $(\rho,\1-\rho)$
  and consider the quantum operation
  $$\kappa_*:\sigma\longmapsto \rho\sigma\rho+(\1-\rho)\sigma(\1-\rho).$$
  Then (with monotonicity of $\|\cdot\|_1$ under quantum operations)
  $$2D=\|\rho-\sigma\|_1\geq\|\kappa_*\rho-\kappa_*\sigma\|_1=\|\rho-\kappa_*\sigma\|_1$$
  (since $\rho=\kappa_*\rho$). Hence with $F=\tr(\sigma\rho)$
  \begin{equation*}\begin{split}
    2D &\geq\big{\|}(1-F)\rho-\tr(\sigma(\1-\rho))\pi\big{\|}_1\\
       &=(1-F)+(1-F)=2(1-F),
  \end{split}\end{equation*}
  for a state $\pi$ supported in $\1-\rho$.
\end{proof}

\medskip
Observe that the inequalities of this lemma still hold if only $\sum_j q_j\leq 1$.
\par
Now we are ready to prove the main object of the present section:

\medskip
\begin{lemma}[Gentle measurement]
  \label{lemma:tender:operator}
  Let $\rho$ be a state, and $X$ a positive operator with $X\leq\1$
  and $1-\tr(\rho X)\leq\lambda\leq 1$. Then
  $$\left\|\rho-\sqrt{X}\rho\sqrt{X}\right\|_1\leq\sqrt{8\lambda}\ .$$
\end{lemma}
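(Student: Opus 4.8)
The plan is to reduce the estimate to the pure-state comparisons already established, using the spectral decomposition of $\rho$ together with convexity of the trace norm. Write $\rho=\sum_i p_i\ket{e_i}\bra{e_i}$ with $(\ket{e_i})$ an orthonormal eigenbasis. Then $\sqrt{X}\rho\sqrt{X}=\sum_i p_i\sqrt{X}\ket{e_i}\bra{e_i}\sqrt{X}$, so by the triangle inequality
$$\left\|\rho-\sqrt{X}\rho\sqrt{X}\right\|_1\leq\sum_i p_i\left\|\ket{e_i}\bra{e_i}-\sqrt{X}\ket{e_i}\bra{e_i}\sqrt{X}\right\|_1 ,$$
and it suffices to bound each summand. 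Each operator $\sqrt{X}\ket{e_i}\bra{e_i}\sqrt{X}$ is a rank-one positive operator, i.e. a \emph{subnormalized} pure state, of trace $\mu_i=\bra{e_i}X\ket{e_i}\leq 1$; the point is \emph{not} to renormalize it.

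Comparing the pure state $\ket{e_i}\bra{e_i}$ with the subnormalized pure state $\sqrt{X}\ket{e_i}\bra{e_i}\sqrt{X}$, the relevant fidelity is $F_i=\tr\bigl(\ket{e_i}\bra{e_i}\sqrt{X}\ket{e_i}\bra{e_i}\sqrt{X}\bigr)=\bigl(\bra{e_i}\sqrt{X}\ket{e_i}\bigr)^2$. Lemma~\ref{lemma:mixed:state}, in the subnormalized form noted in the remark after it, gives $1-F\geq D^2$, i.e. $D\leq\sqrt{1-F}$, and hence
$$\left\|\ket{e_i}\bra{e_i}-\sqrt{X}\ket{e_i}\bra{e_i}\sqrt{X}\right\|_1=2D_i\leq 2\sqrt{1-F_i} .$$

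It remains to bound $1-F_i$ in terms of $1-\mu_i$. Since $0\leq X\leq\1$ forces $\sqrt{X}\geq X$ (spectrally, $\sqrt{t}\geq t$ on $[0,1]$), one has $\bra{e_i}\sqrt{X}\ket{e_i}\geq\mu_i$, so $F_i\geq\mu_i^2$ and therefore $1-F_i\leq 1-\mu_i^2\leq 2(1-\mu_i)$. Combining everything, then using concavity of the square root (Jensen) together with $\sum_i p_i\mu_i=\tr(\rho X)$ and $1-\tr(\rho X)\leq\lambda$,
$$\left\|\rho-\sqrt{X}\rho\sqrt{X}\right\|_1\leq 2\sqrt{2}\sum_i p_i\sqrt{1-\mu_i}\leq 2\sqrt{2}\sqrt{1-\tr(\rho X)}\leq\sqrt{8\lambda} .$$

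I expect the only real obstacle to be organizational rather than deep: keeping the subnormalized pure state $\sqrt{X}\ket{e_i}\bra{e_i}\sqrt{X}$ as it stands instead of dividing by $\mu_i$ (which would clutter the estimates with $1/\mu_i$ factors and create a spurious problem when $\mu_i=0$), and choosing exactly the chain $\sqrt{X}\geq X$, then $1-\mu^2\leq 2(1-\mu)$, then Jensen, so that the constant emerging from $2\sqrt{2}$ is precisely $\sqrt{8}$ and nothing larger.
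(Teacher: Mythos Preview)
Your proof is correct and follows essentially the same approach as the paper: spectral decomposition of $\rho$, triangle inequality, the pure--vs--subnormalized fidelity estimate from Lemma~\ref{lemma:mixed:state}, the elementary bounds $\sqrt{X}\geq X$ and $1-t^2\leq 2(1-t)$, and Jensen. The only cosmetic difference is the order in which Jensen is applied---the paper squares the left-hand side at the outset and uses convexity of $x\mapsto x^2$, whereas you carry the square root to the end and use concavity of $\sqrt{\cdot}$---but the ingredients and the resulting constant $\sqrt{8}$ are identical.
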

\begin{proof}
  Let $Y=\sqrt{X}$ and write $\rho=\sum_k p_k\pi_k$ with one--dimensional
  projectors $\pi_k$ and weights $p_k\geq 0$. Now
  \begin{equation*}\begin{split}
    \|\rho-Y\rho Y\|_1^2 &\leq \left(\sum_k p_k\|\pi_k-Y\pi_k Y\|_1\right)^2\\
                         &\leq \sum_k p_k\|\pi_k-Y\pi_k Y\|_1^2\\
                         &\leq 4\sum_k p_k(1-\tr(\pi_k Y\pi_k Y))\\
                         &\leq 8\sum_k p_k(1-\tr(\pi_k Y))\\
                         &=    8(1-\tr(\rho Y))\\
                         &\leq 8(1-\tr(\rho X))\leq 8\lambda
  \end{split}\end{equation*}
  by triangle inequality, convexity of $x\mapsto x^2$,
  Lemma~\ref{lemma:mixed:state}, $1-x^2\leq 2(1-x)$, and $X\leq Y$.
\end{proof}

\section{Code bounds}
\label{sec:code:bounds}
We can now give a new proof of the quantum channel coding theorem by
a maximal code argument (which in the classical case is due to
Feinstein~\cite{feinstein:maximal}), and prove the strong converse.

\medskip
\begin{satz}[Code construction]
  \label{satz:code:construction}
  For $\lambda,\tau\in(0,1)$ there exist $\delta>0$ and
  a constant $K(\lambda,\tau,a,d)$ such that for every
  cq--DMC $W$, probability distribution $P$ on $\fset{X}$,
  $n>0$, and $\fset{A}\subset\fset{X}^n$ with
  $P^{\otimes n}(\fset{A})\geq\tau$ there is an $(n,\lambda)$--code
  $(f,D)$ with the properties
  $$\forall m\in\fset{M}\quad f(m)\in\fset{A}\text{ and }
                              \tr D_m\leq\tr\Pi^n_{W,\delta}(f(m)),$$
  and $|\fset{M}|\geq\exp\left(nI(P;W)-K(\lambda,\tau,a,d)\sqrt{n}\right)$.
\end{satz}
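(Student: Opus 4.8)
The plan is to carry out Feinstein's maximal (greedy) code construction in the operator setting, with the gentle measurement Lemma~\ref{lemma:tender:operator} as the device that replaces set-theoretic complements of decoding regions. Fix $\delta>0$ (to be chosen large at the end, depending only on $\lambda,\tau,a,d$). For $x^n\in\fset{T}^n_{P,\delta}$ write $\Lambda_{x^n}=\Pi^n_{W,\delta}(x^n)$ and $\Pi=\Pi^n_{PW,\delta'\sqrt a}$, where $\delta'$ is a fixed multiple of $\delta$ large enough that the argument of Lemma~\ref{lemma:weak:law} still applies to every $x^n\in\fset{T}^n_{P,\delta}$ (its proof only needs that the eigen--type of $W_{x^n}$ lies within $O(\delta\sqrt{an})$ of that of $(PW)^{\otimes n}$, which holds as soon as $P_{x^n}$ deviates from $P$ by order $\delta/\sqrt n$); thus $\tr(W_{x^n}\Pi)\geq 1-ad/\delta^2$, with the bound in terms of the inner constant $\delta$. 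Set $E_{x^n}=\Pi\Lambda_{x^n}\Pi$. Three facts will be used: (i) $0\leq E_{x^n}\leq\1$, $\supp E_{x^n}\subseteq\im\Pi$, and $\tr E_{x^n}\leq\tr\Lambda_{x^n}=\tr\Pi^n_{W,\delta}(x^n)$; (ii) by Lemma~\ref{lemma:conditional:variance:typical:proj} and one application of Lemma~\ref{lemma:tender:operator} to $\Pi$, $\tr(W_{x^n}E_{x^n})=\tr(\Pi W_{x^n}\Pi\cdot\Lambda_{x^n})\geq\tr(W_{x^n}\Lambda_{x^n})-\|W_{x^n}-\Pi W_{x^n}\Pi\|_1\geq 1-\varepsilon(\delta)$ with $\varepsilon(\delta)\to 0$ as $\delta\to\infty$; (iii) since $\Pi(PW)^{\otimes n}\Pi\leq\exp(-nH(PW)+O(\sqrt n))\,\Pi$ by Lemma~\ref{lemma:variance:typical:proj} and $\tr\Lambda_{x^n}\leq\exp(nH(W|P)+O(\sqrt n))$ (using that $H(W|P_{x^n})$ and $H(W|P)$ differ by $O(1/\sqrt n)$ for $x^n\in\fset{T}^n_{P,\delta}$), one gets $\tr((PW)^{\otimes n}E_{x^n})\leq\exp(-nI(P;W)+O(\sqrt n))$. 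Throughout, every $O(\sqrt n)$ has a coefficient depending only on $\delta',a,d$, hence — once $\delta$ is fixed — only on $\lambda,\tau,a,d$.

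I then build the code greedily. Maintain codewords $f(1),\dots,f(k-1)\in\fset A\cap\fset{T}^n_{P,\delta}$ and operators $D_1,\dots,D_{k-1}$ with $Q_{k-1}:=\sum_{i<k}D_i\leq\1$ and $\supp Q_{k-1}\subseteq\im\Pi$. Call $x^n\in\fset A\cap\fset{T}^n_{P,\delta}$ \emph{addable} if $\tr(W_{x^n}Q_{k-1})\leq\lambda_0$, a positive constant fixed below. If some $x^n$ is addable, put $f(k)=x^n$ and $D_k=\sqrt{\1-Q_{k-1}}\,E_{x^n}\sqrt{\1-Q_{k-1}}$; then $0\leq D_k\leq\1-Q_{k-1}$ (so $Q_k\leq\1$), and since $\supp E_{x^n}\subseteq\im\Pi\supseteq\supp Q_{k-1}$ the operator $\sqrt{\1-Q_{k-1}}$ is block diagonal with respect to $\im\Pi\oplus\im\Pi^\perp$ and preserves $\im\Pi$, giving $\supp D_k\subseteq\im\Pi$ and $\tr D_k\leq\tr E_{x^n}\leq\tr\Pi^n_{W,\delta}(f(k))$. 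For the error, $1-\tr(W_{x^n}(\1-Q_{k-1}))=\tr(W_{x^n}Q_{k-1})\leq\lambda_0$, so Lemma~\ref{lemma:tender:operator} yields $\|W_{x^n}-\sqrt{\1-Q_{k-1}}W_{x^n}\sqrt{\1-Q_{k-1}}\|_1\leq\sqrt{8\lambda_0}$, whence $\tr(W_{x^n}D_k)\geq\tr(W_{x^n}E_{x^n})-\sqrt{8\lambda_0}\geq 1-\varepsilon(\delta)-\sqrt{8\lambda_0}$. Now choose $\delta$ so large that $\varepsilon(\delta)\leq\lambda/2$ and $a/\delta^2\leq\tau/2$, and then $\lambda_0>0$ so small that $\sqrt{8\lambda_0}\leq\lambda/2$ and $\lambda_0<1-\lambda$: every constructed codeword has error $\leq\lambda$, and no codeword is ever re-added (afterwards $\tr(W_{f(m)}Q_k)\geq 1-\lambda>\lambda_0$). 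The process stops after $N=|\fset M|\geq 1$ steps — the first step succeeds because $\fset A\cap\fset{T}^n_{P,\delta}$ has $P^{\otimes n}$-measure $\geq\tau-a/\delta^2>0$ by Lemma~\ref{lemma:wolfowitz}, hence is nonempty — leaving the terminating partial POVM, completed by $D_0=\1-Q_N\geq 0$, as the desired $(n,\lambda)$-code with all the stated properties.

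It remains to count. At termination no $x^n\in\fset A\cap\fset{T}^n_{P,\delta}$ is addable: $\tr(W_{x^n}Q_N)>\lambda_0$. Averaging over $P^{\otimes n}$ restricted to $\fset A\cap\fset{T}^n_{P,\delta}$ and using $\sum_{x^n}P^{\otimes n}(x^n)W_{x^n}=(PW)^{\otimes n}$ (so the restricted average is $\leq(PW)^{\otimes n}$, and $Q_N\geq 0$),
\[
  \tfrac{\lambda_0\tau}{2}\leq\lambda_0\,P^{\otimes n}(\fset A\cap\fset{T}^n_{P,\delta})<\tr\bigl((PW)^{\otimes n}Q_N\bigr).
\]
Since $\supp Q_N\subseteq\im\Pi$, the right-hand side equals $\tr(\Pi(PW)^{\otimes n}\Pi\cdot Q_N)\leq\exp(-nH(PW)+O(\sqrt n))\tr Q_N$, and $\tr Q_N=\sum_m\tr D_m\leq N\max_m\tr\Pi^n_{W,\delta}(f(m))\leq N\exp(nH(W|P)+O(\sqrt n))$. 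Hence $N\geq\tfrac{\lambda_0\tau}{2}\exp(nI(P;W)-O(\sqrt n))$, and absorbing the constant $\lambda_0\tau/2$ into the exponent gives $|\fset M|\geq\exp\bigl(nI(P;W)-K(\lambda,\tau,a,d)\sqrt n\bigr)$.

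The crux is the quantum analogue of Feinstein's ``decode on whatever is left over'' step: one cannot intersect with the complement of the previously used measurement region, but replacing $E_{x^n}$ by $\sqrt{\1-Q_{k-1}}E_{x^n}\sqrt{\1-Q_{k-1}}$ is nevertheless forced to keep the success probability almost intact — this is exactly Lemma~\ref{lemma:tender:operator}, and it dictates that the termination threshold $\lambda_0$ be chosen strictly below $\lambda$ (after $\varepsilon(\delta)$ has been pushed below $\lambda$). The only other thing to watch is that all the approximations — the enlargement of the output typical projector, $H(W|P_{x^n})$ versus $H(W|P)$, and the single use of gentle measurement — stay within the $O(\sqrt n)$ budget, which they do because each deviation is $O(\delta\sqrt n)$ with $\delta$ a fixed constant.
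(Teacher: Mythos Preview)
Your proof is correct and follows the same Feinstein--type maximal (greedy) code strategy driven by the gentle measurement Lemma~\ref{lemma:tender:operator}, exactly as the paper does. The substantive difference is in the counting step. The paper takes $D_{x^n}=\sqrt{\1-B}\,\Pi^n_{W,\delta}(x^n)\,\sqrt{\1-B}$ without any outer sandwich; once $B$ is an $\eta$--shadow of $(PW)^{\otimes n}$ it invokes the shadow bound (Lemma~\ref{lemma:abstract:shadow:bound} via Lemma~\ref{lemma:variance:typical:proj}) to lower--bound $\tr B$ directly. You instead pre--sandwich by the output typical projector $\Pi=\Pi^n_{PW,\delta'\sqrt a}$, forcing $\supp Q_N\subseteq\im\Pi$ and then using the eigenvalue bound $\Pi(PW)^{\otimes n}\Pi\leq\exp(-nH(PW)+O(\sqrt n))\Pi$. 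Both routes land on the same estimate, but the paper's is more economical: it needs only one application of gentle measurement and does not use the weak law (Lemma~\ref{lemma:weak:law}) at all in the direct part. Your route costs an extra gentle--measurement step and, more importantly, needs Lemma~\ref{lemma:weak:law} for sequences of \emph{approximate} type $P$, uniformly in $P$ and $W$. That extension is true, but your one--line justification is a bit thin: the constant in the variance--typical condition scales with $\sqrt{q_j(1-q_j)}$, and when some eigenvalue $q_j$ of $PW$ satisfies $nq_j(1-q_j)<1$ the Chebyshev step in the proof of Lemma~\ref{lemma:weak:law} does not transfer verbatim (one has to use, e.g., Markov's inequality on $N(j|j^n)$ in that regime). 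This is easily patched, but worth noting.
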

\begin{proof}
  Let $\fset{A}'=\fset{A}\cap\fset{T}^n_{P,\sqrt{{2ad}/{\tau}}}$
  (thus $P^{\otimes n}(\fset{A}')\geq\tau/2$) and
  $(f,D)$ a maximal (i.e. non--extendible) $(n,\lambda)$--code with
  $$\forall m\in\fset{M}\quad f(m)\in\fset{A}'\text{ and }
                              \tr D_m\leq\tr\Pi^n_{W,\delta}(f(m)),$$
  where $\delta=\sqrt{\frac{2ad}{\lambda}}$.
  In particular (by Lemma~\ref{lemma:conditional:variance:typical:proj})
  $$\tr D_m\leq\exp\!\left(\! nH(W|P)\!
        +\!(Kd\sqrt{a}\delta+Ka\sqrt{\frac{2ad}{\tau}}\log d)\sqrt{n}\!\right)\!\!.$$
  Of course $\fset{M}$ may be empty. We claim
  however that $B=\sum_{m\in\fset{M}} D_m$ is an $\eta$--shadow
  for all $W_{x^n}$, $x^n\in\fset{A}'$,
  with $\eta=\min\{1-\lambda,\lambda^2/32\}$.
  \par
  This is clear for codewords, and for other $x^n$ we could else
  extend $(f,D)$ with the codeword $x^n$ and corresponding
  observable operator
  $$D_{x^n}=\sqrt{\1-B}\Pi^n_{W,\delta}(x^n)\sqrt{\1-B}.$$
  To see this note first that $D_{x^n}\leq\1-B$, and
  $\tr D_m\leq\tr\Pi^n_{W,\delta}(x^n)$.
  Now apply Lemma~\ref{lemma:tender:operator} to the assumption
  $\tr\left(W_{x^n}(\1-B)\right)\geq 1-\lambda^2/32$ and obtain
  $$\left\|W_{x^n}-\sqrt{\1-B}W_{x^n}\sqrt{\1-B}\right\|_1\leq\frac{\lambda}{2}\ .$$
  Hence we can estimate (with $\Pi^n=\Pi^n_{W,\delta}(x^n)$):
  \begin{equation*}\begin{split}
    \tr(W_{x^n}D_{x^n}) &=\tr(W_{x^n}\Pi^n)- \\
               &\quad-\tr\left((W_{x^n}-\sqrt{\1-B}W_{x^n}\sqrt{\1-B})\Pi^n\right) \\
               &\geq 1-\frac{\lambda}{2}
                      -\left\|W_{x^n}-\sqrt{\1-B}W_{x^n}\sqrt{\1-B}\right\|_1 \\
               &\geq 1-\lambda.
  \end{split}\end{equation*}
  This proves our claim, and averaging over $P^{\otimes n}$ we find
  $$\tr\left((PW)^{\otimes n}B\right)\geq \eta\tau/2,$$
  from which, by Lemma~\ref{lemma:variance:typical:proj}, we deduce
  \begin{equation*}\begin{split}
    \sum_{m\in\fset{M}} \tr D_m &=   \tr B \\
                                &\geq\left(\frac{\eta\tau}{2}-\frac{d}{\delta_0^2}\right)
                                 \exp\left(nH(PW)-Kd\delta_0\sqrt{n}\right).
  \end{split}\end{equation*}
  Choosing $\delta_0=\sqrt{\frac{4d}{\eta\tau}}$ the proof is complete.
\end{proof}

\medskip
\begin{bem}
  \label{bem:code:structure}
  It is interesting to note from the proof that the decoder may be chosen a
  von Neumann observable (i.e. all its operators are mutually orthogonal
  projectors). This is because if $(f,D)$ is of this type, then $B$ is a projector,
  and this means that we may instead of the constructed $D_{x^n}\leq \1-B$ use
  the projector $D_{x^n}'=\supp D_{x^n}$: this is still bounded
  by $\1-B$, only decreases the error probability, and obeys the size
  condition:
  $\tr\supp D_{x^n}=\dim\im\, D_{x^n}\leq
             \dim\im\,\Pi^n_{W,\delta}(x^n)=\tr\Pi^n_{W,\delta}(x^n)$.
  \par
  On the other hand
  it would be nice if we could decide if the decoder may consist
  of separable operators. It is clear that a product observable cannot do,
  as was pointed out by Holevo~\cite{holevo:overview}: otherwise larger
  capacities could not be reached using block decoding. But it may be that
  nonlocality as in the recent work of Bennett et.~al.~\cite{bennett:nonlocal}
  is sufficient, and genuine entanglement is not needed
  (as was proposed in the cited work of Holevo).
\end{bem}

\medskip
\begin{bem}
  Our method of proof might seem very abstract. In fact
  it is not, as the argument in the proof may be understood as a
  \emph{greedy method} of extending a given code: start from the empty
  code, and add codewords after the prescription of the proof, until
  you are stuck. The theorem then guarantees that the resulting code is
  rather large.
\end{bem}

\medskip
\begin{satz}[Strong converse]
  \label{satz:strong:converse}
  For $\lambda\in(0,1)$ there exists a constant $K(\lambda,a,d)$
  such that for every cq--DMC $W$ and $(n,\lambda)$--code $(f,D)$
  $$|\fset{M}|\leq\exp\left(nC(W)+K(\lambda,a,d)\sqrt{n}\right).$$
\end{satz}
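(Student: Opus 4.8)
The plan is to prove the strong converse by a type-decomposition argument that parallels Wolfowitz's classical method, using the shadow bounds from Section~\ref{sec:types:etc} and the gentle-measurement lemma. First I would fix an $(n,\lambda)$-code $(f,D)$. Since the number of types on $\fset{X}$ is at most $(n+1)^a$ (type counting), the message set $\fset{M}$ splits into at most $(n+1)^a$ subsets according to the type $P$ of the codeword $f(m)$; it suffices to bound each piece $\fset{M}_P$ by $\exp(nC(W)+O(\sqrt n))$, since $(n+1)^a=\exp(a\log(n+1))$ is absorbed into the $\sqrt n$ term. So from now on assume all codewords $f(m)$, $m\in\fset{M}$, have the same type $P$.

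Next I would insert the typical projectors. For each $m$, the weak law of large numbers (Lemma~\ref{lemma:weak:law}) gives $\tr(W_{f(m)}\Pi^n_{PW,\delta\sqrt a})\geq 1-ad/\delta^2$, and the conditional typical projector (Lemma~\ref{lemma:conditional:variance:typical:proj}) gives $\tr(W_{f(m)}\Pi^n_{W,\delta}(f(m)))\geq 1-ad/\delta^2$. Choosing $\delta$ large enough (depending only on $\lambda,a,d$) I can make both these traces, together with the code success probability $\tr(W_{f(m)}D_m)\geq 1-\lambda$, simultaneously close to $1$; intersecting the three near-certain events, $\tr\bigl(W_{f(m)}\,\widetilde D_m\bigr)$ stays bounded below, where $\widetilde D_m=\Pi^n_{PW,\delta\sqrt a}\,\Pi^n_{W,\delta}(f(m))\,D_m\,\Pi^n_{W,\delta}(f(m))\,\Pi^n_{PW,\delta\sqrt a}$ is the ``squeezed'' decoding operator. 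The point of squeezing is that now each $\widetilde D_m$ is both (i) small in trace, because it sits inside $\Pi^n_{W,\delta}(f(m))$ whose trace is at most $\exp(nH(W|P)+Kd\sqrt a\,\delta\sqrt n)$, and (ii) it is an $\eta$-shadow of the small-dimensional state obtained by compressing $(PW)^{\otimes n}$ into the range of $\Pi^n_{PW,\delta\sqrt a}$, whose eigenvalues are all at most $\exp(-nH(PW)+Kd\delta\sqrt n\sqrt a)$.

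Then the counting goes as in the classical case: on one hand $\sum_m \tr\widetilde D_m\leq |\fset{M}_P|\cdot\exp(nH(W|P)+O(\sqrt n))$ since each term is at most $\tr\Pi^n_{W,\delta}(f(m))$. On the other hand, using that the $D_m$ sum to at most $\1$ and that all $\widetilde D_m$ live in the range of the fixed projector $\Pi^n_{PW,\delta\sqrt a}$, one gets $\sum_m \tr\widetilde D_m\leq \Vert\Pi^n_{PW,\delta\sqrt a}(PW)^{\otimes n}\Pi^n_{PW,\delta\sqrt a}\Vert_\infty^{-1}\cdot(\text{lower bound on each }\tr(\ldots))$ — more precisely, comparing $\tr(W_{f(m)}\widetilde D_m)$ with $\tr((PW)^{\otimes n}\widetilde D_m)$ via the gentle-measurement Lemma~\ref{lemma:tender:operator} (the disturbance $\Vert W_{f(m)}-\Pi(PW)^{\otimes n}\Pi\Vert$... actually the relevant bound is on $\Vert W_{f(m)}-\sqrt{\Pi}W_{f(m)}\sqrt\Pi\Vert_1$), one shows $\sum_m\tr\widetilde D_m\geq c\cdot\exp(nH(PW)-O(\sqrt n))$ for a constant $c=c(\lambda,a,d)>0$. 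Dividing the two estimates yields $|\fset{M}_P|\leq c^{-1}\exp\bigl(n(H(PW)-H(W|P))+O(\sqrt n)\bigr)=\exp(nI(P;W)+O(\sqrt n))\leq\exp(nC(W)+O(\sqrt n))$, and summing over the $(n+1)^a$ types finishes the proof.

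The main obstacle I anticipate is step two: controlling the interaction between the \emph{two different} projectors $\Pi^n_{W,\delta}(f(m))$ (which depends on $m$) and the fixed $\Pi^n_{PW,\delta\sqrt a}$, and making sure that after squeezing by both, the operator $\widetilde D_m$ still captures enough of $W_{f(m)}$ that $\tr(W_{f(m)}\widetilde D_m)$ is bounded away from $0$. This requires applying the gentle-measurement lemma \emph{twice} (once for each projector) and carefully choosing $\delta$ so that the accumulated error $\sqrt{8\cdot(\text{stuff})}+\sqrt{8\lambda}$ stays below a fixed threshold like $\eta=1-\lambda'$; the bookkeeping of which constants depend on which, and verifying that every $O(\sqrt n)$ term genuinely has the claimed form $K(\lambda,a,d)\sqrt n$ with no hidden $n$-dependence in $K$, is the delicate part. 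Everything else is trace counting and type counting, which is routine given the lemmas already established.
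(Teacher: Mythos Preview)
Your overall architecture---reduce to constant type, sandwich the decoding operators by typical projectors, use gentle measurement to keep the success probability bounded away from zero, then count---is exactly the paper's plan. But your counting step has the two inequalities reversed, and as written it proves nothing.

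You assert (i) $\sum_m \tr\widetilde D_m\leq |\fset{M}_P|\cdot\exp(nH(W|P)+O(\sqrt n))$ and (ii) $\sum_m \tr\widetilde D_m\geq c\cdot\exp(nH(PW)-O(\sqrt n))$. Chaining these gives $|\fset{M}_P|\geq c\exp(nI(P;W)-O(\sqrt n))$, a \emph{lower} bound on the code size, not the upper bound you need. The facts you invoke for (ii)---that the $D_m$ sum to at most $\1$ and that the $\widetilde D_m$ live under the fixed projector $\Pi^n_{PW,\delta\sqrt a}$---actually yield the \emph{upper} bound $\sum_m \tr\widetilde D_m\leq \tr\Pi^n_{PW,\delta\sqrt a}\leq\exp(nH(PW)+O(\sqrt n))$. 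And the way to get a useful lower bound is on each \emph{individual} term: since $\widetilde D_m$ is an $\eta$--shadow of $W_{f(m)}$, the shadow bound in Lemma~\ref{lemma:conditional:variance:typical:proj} gives $\tr\widetilde D_m\geq c\exp(nH(W|P)-O(\sqrt n))$. With the inequalities in this (correct) direction you obtain $|\fset{M}_P|\cdot c\exp(nH(W|P)-O(\sqrt n))\leq\exp(nH(PW)+O(\sqrt n))$, which is the desired upper bound.

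Once you fix this, the inner sandwich by $\Pi^n_{W,\delta}(f(m))$ becomes superfluous: the paper sets $D_m'=\Pi^n_{PW,\delta\sqrt a}\,D_m\,\Pi^n_{PW,\delta\sqrt a}$ with the \emph{single} fixed projector, applies gentle measurement \emph{once} (via Lemma~\ref{lemma:weak:law}) to see that $(f,D')$ is still an $(n,\tfrac{1+\lambda}{2})$--code, and then the conditional typical projector enters only through the shadow bound lemma, not through any explicit sandwiching. This eliminates exactly the obstacle you flagged---the interaction of two non-commuting projectors and the double application of gentle measurement---because that obstacle was self-inflicted.
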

\begin{proof}
  We will prove even a little more: if additionally all codewords
  are of the same type $P$ then
  $$|\fset{M}|\leq\frac{4}{1-\lambda}
                  \exp\left(nI(P;W)+2Kd\sqrt{a}\delta\sqrt{n}\right),$$
  with $\delta=\frac{\sqrt{32ad}}{1-\lambda}$, which by type counting
  implies the theorem.
  \par
  To prove this modify the code as follows: construct new decoding
  operators
  $$D_m'=\Pi^n_{PW,\delta\sqrt{a}}D_m\Pi^n_{PW,\delta\sqrt{a}}.$$
  Then $(f,D')$ is an $(n,\frac{1+\lambda}{2})$--code because for
  $m\in\fset{M}$, with $\Pi^n=\Pi^n_{PW,\delta\sqrt{a}}$
  \begin{equation*}\begin{split}
    \tr(W_{f(m)}D_m') &=\tr(W_{f(m)}D_m)- \\
                      &\quad-\tr\left((W_{f(m)}-\Pi^n W_{f(m)}\Pi^n)D_m\right) \\
                      &\geq 1-\lambda
                        -\|W_{f(m)}-\Pi^n W_{f(m)}\Pi^n\|_1                \\
                      &\geq 1-\lambda-\sqrt{\frac{8ad}{\delta^2}}
                       =   \frac{1-\lambda}{2}\ .
  \end{split}\end{equation*}
  Now by Lemma~\ref{lemma:conditional:variance:typical:proj}
  $$\tr D_m'\geq\frac{1-\lambda}{4}\exp\left(nH(W|P)-Kd\sqrt{a}\delta\sqrt{n}\right).$$
  On the other hand (with Lemma~\ref{lemma:variance:typical:proj})
  \begin{equation*}\begin{split}
    \sum_{m\in\fset{M}}\tr D_m' &\leq\tr\Pi^n_{PW,\delta\sqrt{a}} \\
                                &\leq\exp\left(nH(PW)+Kd\sqrt{a}\delta\sqrt{n}\right),
  \end{split}\end{equation*}
  and we are done.
\end{proof}

\section{Holevo bound}
\label{sec:holevo:bound}
An interesting application of our converse
Theorem~\ref{satz:strong:converse}
is in a new, and completely elementary proof of
the famous Holevo bound:
\par
For a cq--DMC $W:\fset{X}\rightarrow\alg{S}(\alg{L}({\cal H}))$,
a probability distribution $P$ on $\fset{X}$, and an observable $D$ on
$\alg{Y}$, say indexed
by $\fset{Y}$, the composition $D_*\circ W:\fset{X}\rightarrow\fset{Y}$
is a \emph{classical} channel.
\par
Holevo in~\cite{holevo:bound} considers
$C_1=\max_{P,D} I(P;D_*\circ W)$
(the capacity if one is restricted to tensor product observables!)
and proves analytically $C_1\leq C(W)$. For us this is now clear, since all codes
for the classical channels $D_*\circ W$ (whose maximal rates are asymptotically
bounded by $C_1$) can be interpreted as special channel codes for $W$.
\par
But we can show even a little more, namely Holevo's original
\emph{information bound} $I(P;D_*\circ W)\leq I(P;W)$, from which the
capacity estimate clearly follows.

\medskip
\begin{proof}
  Assume the opposite, $I(P;D_*\circ W)>I(P;W)$. Then by the well known
  classical coding theorem (alternatively the quantum channel
  coding Theorem~\ref{satz:code:construction} which
  generalizes the classical case) there is to every
  $\delta>0$ an infinite sequence of $(n,1/2)$--codes with codewords
  chosen from
  $\fset{T}^n_{P,\sqrt{2a}}$ for the channel $D_*\circ W$ with
  rates exceeding $I(P;D_*\circ W)-\delta$. Restricting to a single
  type of codewords we find constant composition codes (of type $P_n$)
  with rate exceeding $I(P;D_*\circ W)-2\delta$ (if $n$ is large enough).
  \par
  As already explained these are special channel codes for $W$, so
  by Theorem~\ref{satz:strong:converse} (proof) their rates are upper
  bounded by $I(P_n;W)+\delta$ (again, $n$ large enough), hence
  $$I(P;D_*\circ W)-2\delta \leq I(P_n;W)+\delta.$$
  Collecting inequalities we find
  $$I(P;W) < I(P;D_*\circ W) \leq I(P_n;W)+3\delta.$$
  But since $P_n\rightarrow P$ by assumption and by the continuity
  of $I$ in $P$, since furthermore $\delta$ is arbitrarily small, we end up with
  $$I(P;W) < I(P;D_*\circ W) \leq I(P;W),$$
  a contradiction.
\end{proof}

\section{Conclusion} 
\label{sec:final}
We proved the quantum channel coding theorem and its strong converse
by methods new to quantum information theory (but which are very
close to established methods in classical information theory),
and showed how to obtain the Holevo bound as a corollary.
\par
We want to point out that our technique for proving the code bounds
yields also the coding theorem and strong converse under
\emph{linear constraints}
(see Holevo~\cite{holevo:constraints} for definitions and capacity
formula): simply because satisfying the linear constraints is a property of
whole types, not just individual sequences.
\par
Also we can prove the coding theorem and strong converse in the
case of arbitrary (product) signal states in a general (discrete memoryless)
\emph{quantum--quantum channel} (qq--DMC), see~\cite{holevo:channels}:
this is a completely positive and unit preserving map
$\varphi:\alg{A}_2\rightarrow\alg{A}_1$ between finite dimensional
C${}^*$--algebras $\alg{A}_1,\alg{A}_2$ (or rather its state map
$\varphi_*:\alg{S}(\alg{A}_1)\rightarrow\alg{S}(\alg{A}_2)$).
This includes the cq--DMC as the special case $\alg{A}_1=\C\fset{X}$ and
$\alg{A}_2=\alg{L}({\cal H})$.
\par
An $(n,\lambda)$--code for this channel is a pair $(F,D)$ with a map
$F:\fset{M}\rightarrow\alg{S}(\alg{A}_1^{\otimes n})$ and an observable $D$
on $\alg{A}_2^{\otimes n}$ indexed
by $\fset{M}'\supset\fset{M}$, such that the error probability
$$e(F,D)=\max\{1-\tr(\varphi_*^{\otimes n}(F(m))\cdot D_m):m\in\fset{M}\}$$
is at most $\lambda$. We will consider only the case that the $F(m)$ are
product states (such codes we call \emph{1--separable}, 
following~\cite{schumacher:capacity}, and the corresponding operational
capacity \emph{product state capacity} $C^{(1)}({\varphi_*},\lambda)$),
so the channel and all its possible codes are determined
by the image of $\varphi_*$ in $\alg{S}(\alg{A}_2)$, a compact convex set.
Thus we are back in our original situation, with $\alg{W}$ now
a compact convex set of states in $\alg{L}({\cal H})$ and $W=\id_\alg{W}$.
The capacity we denote $C(\alg{W},\lambda)$, it was for $\lambda\rightarrow 0$
determined by Schumacher and Westmoreland~\cite{schumacher:capacity} (an improved
argument for their weak converse may be found in~\cite{winter:quil}).
\par
With the methods presented in this correspondence one can prove

\medskip
\begin{satz}
  \label{satz:qq:1:capacity}
  With the above notations and $\lambda\in(0,1)$:
  $$C(\alg{W},\lambda)=\sup_{\text{finite }\fset{W}\subset\alg{W}} C(\fset{W}).$$
  Furthermore the supremum
  is in fact a maximum, which is assumed by a finite $\fset{W}\subset\alg{W}$
  of cardinality at most $\dim_{\C}\alg{A}_2$ and consisting of
  extremal points of $\alg{W}$.
\end{satz}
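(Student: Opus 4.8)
I will establish the two inequalities $C(\alg W,\lambda)\ge C^{*}$ and $C(\alg W,\lambda)\le C^{*}$, where $C^{*}:=\sup_{\text{finite }\fset W\subset\alg W}C(\fset W)$, and then show by a convexity argument that $C^{*}$ is attained by a small set of extreme points. The easy direction is immediate: for a finite $\fset W\subset\alg W$ apply the code construction Theorem~\ref{satz:code:construction} to the cq--DMC $w\mapsto w$ on alphabet $\fset W$ with a capacity--achieving $P$ and $\fset A=\fset W^{n}$; reading a sequence $(w_{1},\dots,w_{n})\in\fset W^{n}$ as the product state $w_{1}\otimes\cdots\otimes w_{n}$, which is a legitimate $1$--separable codeword for $\id_{\alg W}$ since $\fset W\subset\alg W$, one obtains $(n,\lambda)$--codes of rate $C(\fset W)-O(1/\sqrt n)$. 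Hence $C(\alg W,\lambda)\ge C(\fset W)$, and taking the supremum, $C(\alg W,\lambda)\ge C^{*}$.

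For the converse I adapt the proof of the strong converse Theorem~\ref{satz:strong:converse}; the one real difficulty is that a product channel over the infinite ``alphabet'' $\alg W$ cannot be approximated by a finite--alphabet one within $o(n)$ in trace norm on block length $n$, since rounding each of the $n$ tensor factors to an $\epsilon$--net costs $\Theta(n\epsilon)$. The device is to use a net only for a cheap clustering step and never inside the analytic estimates. Fix $\epsilon>0$ and a finite $\epsilon$--net $\fset N\subset\alg W$ in trace distance, whose cardinality is a constant. Given an $(n,\lambda)$--code $(F,D)$ with product codewords $F(m)=\rho_{m,1}\otimes\cdots\otimes\rho_{m,n}$, round each $\rho_{m,i}$ to the nearest net point and group codewords by the type over $\fset N$ of the resulting sequence; by type counting this costs only a factor $(n+1)^{|\fset N|}$, i.e.\ $o(n)$ in the exponent. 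Pass to the largest class: all its codewords share one net--type $\bar P$, so the empirical averages $\frac1n\sum_{i}\rho_{m,i}$ are all within $2\epsilon$ of $\sigma_{0}:=\sum_{q}\bar P(q)\,q\in\alg W$, and $\frac1n\sum_{i}H(\rho_{m,i})$ is within $\zeta(\epsilon)$ of $H_{0}:=\sum_{q}\bar P(q)H(q)$, with $\zeta(\epsilon)\to0$ by (uniform) continuity of $H$.

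Now run the argument of Theorem~\ref{satz:strong:converse} with two ``fattened'' typical projectors. On the output side let $\Pi^{n}_{\mathrm{out}}$ be the span of those product eigenvectors of a fixed diagonalization $\sigma_{0}=\sum_{j}q^{0}_{j}\pi^{0}_{j}$ whose empirical spectrum lies in a box of half--width $2\epsilon n+O(\sqrt n)$ about $nq^{0}$; crucially this projector is \emph{common} to the whole class. A Chebyshev estimate in this basis --- for a codeword $m$ the count $N(j|\cdot)$ has mean $n\tr(\pi^{0}_{j}\,\tfrac1n\sum_{i}\rho_{m,i})$, within $2\epsilon n$ of $nq^{0}_{j}$, and variance $\le n/4$, exactly as in Lemma~\ref{lemma:weak:law} --- gives $\tr(W_{F(m)}\Pi^{n}_{\mathrm{out}})\ge1-(1-\lambda)^{2}/32$ uniformly, while counting types in the box and invoking continuity of Shannon entropy yields $\tr\Pi^{n}_{\mathrm{out}}\le(5\epsilon n)^{d}\exp\!\big(n(H(\sigma_{0})+g(\epsilon))\big)$ with $g(\epsilon)\to0$. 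On the conditional side, for each $m$ let $\Lambda_{m}$ be the span of those eigenvectors of $W_{F(m)}$ whose eigenvalue $\prod_{i}r^{m}_{j_{i}|i}$ (with $\rho_{m,i}=\sum_{j}r^{m}_{j|i}\pi^{m}_{ji}$) has $-\log$ within $O(\sqrt n)$ of $\sum_{i}H(\rho_{m,i})$; the per--letter variance of $-\log r^{m}_{j|i}$ is bounded by a constant, because $r(\log r)^{2}$ is bounded on $[0,1]$, so Chebyshev gives $\tr(W_{F(m)}\Lambda_{m})\ge1-(1-\lambda)/4$, while on the range of $\Lambda_{m}$ the eigenvalues of $W_{F(m)}$ are $\le\exp(-\sum_{i}H(\rho_{m,i})+O(\sqrt n))$. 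Setting $D_{m}'=\Pi^{n}_{\mathrm{out}}D_{m}\Pi^{n}_{\mathrm{out}}$, the Gentle measurement Lemma~\ref{lemma:tender:operator} gives $\tr(W_{F(m)}D_{m}')\ge(1-\lambda)/2$, and the shadow bound Lemma~\ref{lemma:abstract:shadow:bound} applied with $\rho=W_{F(m)}$, $\Lambda=\Lambda_{m}$ gives $\tr D_{m}'\ge\tfrac{1-\lambda}{4}\exp(\sum_{i}H(\rho_{m,i})-O(\sqrt n))\ge\tfrac{1-\lambda}{4}\exp(n(H_{0}-\zeta(\epsilon))-O(\sqrt n))$. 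Since $\sum_{m}\tr D_{m}'\le\tr\Pi^{n}_{\mathrm{out}}$, the class has size at most $\tfrac{4}{1-\lambda}(5\epsilon n)^{d}\exp\!\big(n(H(\sigma_{0})-H_{0}+g(\epsilon)+\zeta(\epsilon))+O(\sqrt n)\big)$, and $H(\sigma_{0})-H_{0}=I(\bar P;\id_{\fset N})\le C(\fset N)\le C^{*}$. Multiplying back in the $(n+1)^{|\fset N|}$ classes, letting $n\to\infty$ and then $\epsilon\to0$, we get $C(\alg W,\lambda)\le C^{*}$; with the easy direction, $C(\alg W,\lambda)=C^{*}$.

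It remains to see that $C^{*}$ is a maximum of the stated form. Write $\Phi(\mu)=H(\int w\,d\mu)-\int H(w)\,d\mu$ for probability measures $\mu$ on the compact set $\alg W$; then $\Phi$ is weak-$*$ continuous and finitely supported measures are weak-$*$ dense, so $C^{*}=\sup\{\Phi(\mu):\mu\text{ finitely supported}\}=\max\{\Phi(\mu):\mu\text{ a probability measure on }\alg W\}$. Let $\mu^{*}$ attain the maximum, $\bar\rho=\int w\,d\mu^{*}$; by maximality $\int H\,d\mu^{*}$ equals the lower convex envelope $\ell(\bar\rho)$ of $H$ on $\alg W$. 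Since $\ell$ is convex with $\ell\le H$, an affine function supporting $\ell$ at $\bar\rho$ yields a hyperplane supporting the convex body $Q=\mathrm{conv}\{(w,H(w)):w\in\alg W\}$ at the point $(\bar\rho,\ell(\bar\rho))$; as $Q$ sits in an affine space of dimension $\dim_{\C}\alg A_{2}$, Carath\'eodory \emph{inside} this supporting hyperplane exhibits $(\bar\rho,\ell(\bar\rho))$ as a convex combination of at most $\dim_{\C}\alg A_{2}$ points $(w,H(w))$. Replacing each $w$ by a Minkowski decomposition into extreme points of $\alg W$ leaves the barycenter fixed and cannot raise $\int H$ by concavity of $H$; a further application of Carath\'eodory in the same hyperplane then produces a probability measure $\nu$ supported on at most $\dim_{\C}\alg A_{2}$ extreme points of $\alg W$ with $\int w\,d\nu=\bar\rho$ and $\int H\,d\nu=\ell(\bar\rho)$, whence $\Phi(\nu)=\Phi(\mu^{*})=C^{*}$. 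Then $\fset W:=\supp\nu$ has $|\fset W|\le\dim_{\C}\alg A_{2}$, consists of extreme points, and $C^{*}\ge C(\fset W)\ge\Phi(\nu)=C^{*}$, so $C(\fset W)=C^{*}$. The hard part of the whole argument is the converse's infinite--alphabet issue above; once that is handled by the fixed--net clustering together with the two per--codeword fat projectors, the rest is a recombination of Lemmas~\ref{lemma:wolfowitz}--\ref{lemma:tender:operator} with Theorem~\ref{satz:strong:converse} and the convex--geometry step.
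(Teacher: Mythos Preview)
The paper does not actually prove Theorem~\ref{satz:qq:1:capacity}: immediately after the statement it only says that the capacity formula is proved in full in~\cite{winter:diss} and that the second part is from~\cite{fujiwara:nagaoka}. So there is no in--paper proof to match against; all the text commits to is that ``with the methods presented in this correspondence one can prove'' the result.

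Your proposal is consistent with that promise and appears to be correct. The easy direction is the obvious embedding of finite $\fset W\subset\alg W$ into the product--state code problem via Theorem~\ref{satz:code:construction}. For the converse you do exactly what an infinite--alphabet version of Theorem~\ref{satz:strong:converse} demands: the $\epsilon$--net is used only as a clustering device so that type counting still gives a polynomial factor, while the analytic work is done with two ``fattened'' projectors --- a common output projector $\Pi^n_{\mathrm{out}}$ built from a diagonalization of $\sigma_0$, and a per--codeword entropy--typical projector $\Lambda_m$ whose Chebyshev control rests on the elementary boundedness of $r(\log r)^2$ on $[0,1]$. These feed into Lemma~\ref{lemma:abstract:shadow:bound} and Lemma~\ref{lemma:tender:operator} precisely as in the paper's finite--alphabet strong converse, and the slack $g(\epsilon)+\zeta(\epsilon)$ disappears when $\epsilon\to 0$ after $n\to\infty$. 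The attainment step is the standard convex--geometry argument (weak--$*$ compactness, continuity of $\Phi$, Carath\'eodory inside a supporting hyperplane of $\mathrm{conv}\{(w,H(w))\}$, then concavity of $H$ to pass to extreme points), which is what~\cite{fujiwara:nagaoka} does; the dimension count $\dim_{\C}\alg A_2$ is right because states lie in an affine space of dimension $\dim_{\C}\alg A_2-1$ and one extra real coordinate is spent on $H$.

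In short: since the paper defers the proof, your write--up is not so much an alternative as a bona fide filling--in of the sketch the paper points to, and it uses exactly the toolkit of Sections~\ref{sec:types:etc}--\ref{sec:code:bounds}.
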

\par
The proof of the capacity formula is given in full
in~\cite{winter:diss}. The second part of the statement
is from~\cite{fujiwara:nagaoka}.

\section*{Acknowledgements}
Thanks to Peter L\"ober for discussions on various aspects of the present work,
and to Prof.~Rudolf Ahlswede for his teaching of information theory.
I am indebted to the referee, whose remarks contributed much to
the clarity of the above presentation.

\thispagestyle{empty}

\nocite{*}
\bibliographystyle{IEEE}

\end{document}